\documentclass{article}

 \usepackage[preprint]{neurips_2026}

\usepackage[utf8]{inputenc} 
\usepackage[T1]{fontenc}    
\usepackage{hyperref}       
\usepackage{url}            
\usepackage{booktabs}       
\usepackage{amsfonts, amsmath, amsthm, amssymb}       
\usepackage{nicefrac}       
\usepackage{microtype}      
\usepackage{xcolor}         
\usepackage{algorithm}
\usepackage{algorithmic}
\usepackage{graphicx}
\usepackage{subcaption}

\usepackage{multirow}
\theoremstyle{plain}
\newtheorem{theorem}{Theorem}[section]
\newtheorem{proposition}[theorem]{Proposition}

\newtheorem{corollary}[theorem]{Corollary}
\theoremstyle{definition}

\theoremstyle{remark}
\newtheorem{remark}[theorem]{Remark}
\newcommand{\E}{\mathbb{E}}

\usepackage[most]{tcolorbox}
\newtcolorbox{promptbox}[1]{
    colback=gray!5!white,    
    colframe=blue!75!black, 
    fonttitle=\bfseries,    
    title=#1,               
    arc=2mm,                
    outer arc=1mm,
    enhanced,
    attach boxed title to top left={yshift=-2mm, xshift=2mm},
    boxed title style={colback=blue!75!black}
}

\title{Hiding in Plain Sight: Detectability-Aware Antidistillation of Reasoning Models}

\author{
Max Hartman\thanks{Equal Contribution}\\
University of Illinois Urbana-Champaign\\
\texttt{maxh3@illinois.edu}
  \And
  Vidhata Jayaraman\footnotemark[1] \\
University of Illinois Urbana-Champaign\\
  \texttt{vidhata2@illinois.edu} \\
  \AND
  Moulik Choraria\footnotemark[1] \\
University of Illinois Urbana-Champaign\\
  \texttt{moulikc2@illinois.edu} \\
  \And
  Yash Savani \\
  Carnegie Mellon University \\
  \texttt{ysavani@cs.cmu.edu} \\
  \And
  Lav R. Varshney \\
  Stony Brook University \\
  \texttt{lav.varshney@stonybrook.edu} \\
}
\begin{document}

\maketitle

\begin{abstract}
Distillation via sampling reasoning traces exposes closed-source frontier models to adversarial third parties who can bypass their guardrails and misappropriate their capabilities. Antidistillation methods aim to address this by poisoning reasoning traces to hinder student model learning while preserving teacher performance. However, current methods overlook detectability, both semantic and syntactic, which erodes trust in the teacher's outputs and signals the defense's presence to adversaries. We address this gap by formulating antidistillation as a Stackelberg game whose constraint set explicitly encodes detectability, and show that perturbing sparingly offers an effective, less detectable alternative to poisoning the full trace. Drawing on mechanistic interpretability, we identify thought anchors, sentences with disproportionate counterfactual influence on model outputs, as a principled sparse target: critical to reasoning yet minimally detectable. We instantiate this in \texttt{TraceGuard}, a training-free, black-box proof-of-concept that locates thought anchors via branching-token detection and poisons them to degrade student distillation while preserving trace coherence.
\end{abstract}

\section{Introduction}

Reasoning in large language models (LLMs) has driven substantial performance gains on multi-step tasks such as mathematics, coding, and strategic planning \cite{OpenAILearningo1, QwQ2024, AnthropicOpus4_62026}. Beyond accuracy, the traces themselves offer practical insight into how answers are produced, strengthening trust in model outputs. Releasing them, however, carries real risk. Although knowledge distillation from reasoning traces was originally proposed as a route to more efficient training \cite{HSIEHLY2023, MIKHERJEEMJ2023}, distilled models have since been shown to match, and at times surpass, their teachers at a fraction of the cost \cite{GUOYZS2025}, raising pressing concerns about intellectual property for frontier laboratories competing at the edge of learnability \cite{AnthropicDistillation2026}

These concerns now sit at the forefront of U.S. national policy, with recent legislative proposals explicitly targeting the extraction of capabilities from closed-source AI models \cite{HR8283, NSTM4}, which essentially circumvent export controls on compute. Furthermore, from a safety perspective, distilled models often lose the safety alignment trained into the teacher model \cite{JAHANS2025, SHIWOW2026}, thus creating a direct pathway from publicly exposed reasoning traces to unmoderated, potentially dangerous models. 

\begin{figure}[htbp]
     \centering
     \begin{subfigure}[b]{0.45\textwidth}
         \centering
         \includegraphics[width=\textwidth]{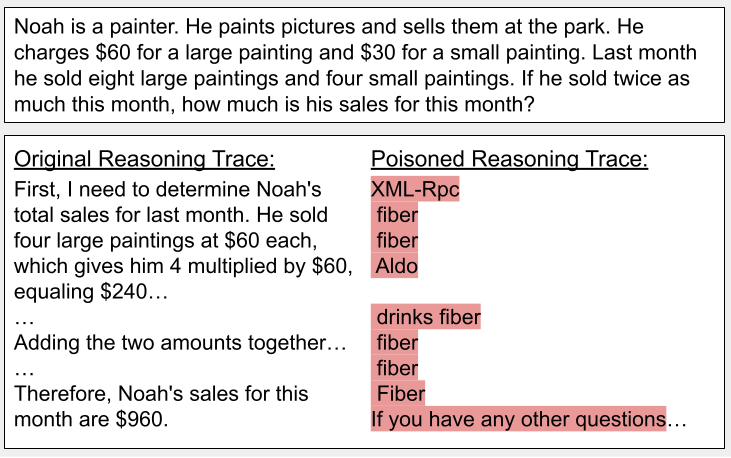}
         \caption{ADS example student output on a prompt. Grammatical incoherency (such as "fiber fiber Aldo drinks fiber...") throughout the poisoned trace (highlighted in red).}
         \label{fig:top_left}
     \end{subfigure}
     \hfill
     \begin{subfigure}[b]{0.45\textwidth}
         \centering
         \includegraphics[width=\textwidth]{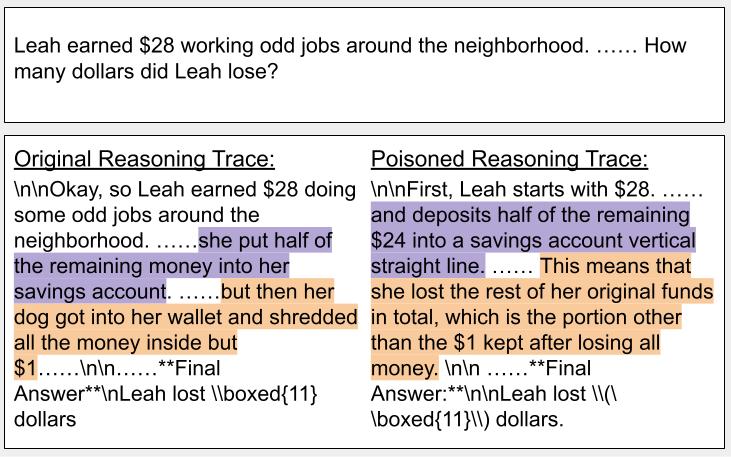}
         \caption{DOGe example student output on a prompt. Poisoned trace looks plausible, but rewritten parts (such as "savings account vertical straight line") are hard to understand.}
         \label{fig:top_right}
     \end{subfigure}
     
     \caption{Original and Poisoned Reasoning traces by ADS and DOGe show the shortcomings of existing defense methods on samples from the GSM8K \cite{COBBEKBC2021} dataset. These methods are all detectable or obscure the trace to the degree that the trace may be unhelpful for an actual user.}
     \label{fig:shortcomings}
\end{figure}

LLM summarization has been proposed as a preventive measure, but compressing reasoning traces can introduce hallucinations about the original trace \cite{TANGSWB2024, SONGSSC2024} and obscure key logical steps, eroding user trust. Token poisoning methods have been proposed to mitigate this risk \cite{SAVANITFX2025, LITZQ2025}, though these techniques have limited applicability. Notably, they rely on access to a proxy student model during trace generation, an unrealistic assumption, especially considering architecture varieties and scale. Even setting this aside, the methods exhibit further shortcomings: Specifically, Antidistillation Sampling (ADS) \cite{SAVANITFX2025} often makes the poisoned traces grammatically incoherent, causing an unacceptable level of detectability. Although a trace's primary purpose is utility to the user, Defensive Output Generation (DOGe) \cite{LITZQ2025} renders traces very difficult for users to follow. We highlight these shortcomings in Figure \ref{fig:shortcomings}. 

To address current limitations, we formulate antidistillation in a game-theoretic framework, drawing on its connection to data poisoning. We consider a case study of poisoning with Gaussian perturbations, which theoretically motivates that poisoning fewer tokens is less detectable. We then draw on recent work in LLM interpretability, which has shown that specific sentences, known as \textit{Thought Anchors}, in reasoning traces are disproportionately important to the model output and reasoning flow. We justify that poisoning these sentences can disrupt the distillation of the downstream student. Since computing Thought Anchors is computationally expensive, we propose and test \textit{TraceGuard} (TG), a cheap proxy which searches for thought anchors by detecting common key branching words. We show that TG traces degrade student distillation compared to unpoisoned traces. We also compare the detectability of TG to ADS on detectability metrics, showing competitive performance.

To summarize, our contributions are as follows: 
\begin{enumerate}
    \item \textbf{Theoretical Formulation:} We formulate antidistillation as a bi-level optimization problem, mapping standard data poisoning frameworks to the LLM distillation setting to ground future security research.
    \item \textbf{Detectability Bounds:} We formally establish that the detectability of token-level modifications scales linearly with the number of perturbed or deleted tokens. This provides rigorous theoretical justification for using sparse poisoning strategies over dense perturbations.
    \item \textbf{Strategic Targeting:} We identify \textit{thought anchors}, specifically planning and uncertainty management sentences, as optimal targets for antidistillation. We demonstrate that disrupting these sentences breaks the student's reasoning trajectory without compromising the teacher's final utility or output coherence, \textit{while being entirely attacker-agnostic}. 
    \item \textbf{TraceGuard Method:} As a proof-of-concept, we introduce \texttt{TraceGuard}, an efficient method that targets thought anchors using branching tokens. We show that it successfully degrades downstream distillation while retaining coherence, unlike competing methods.

\end{enumerate}

\section{Related Work}
\subsection{Antidistilation}
Antidistillation techniques aim to prevent adversarial model distillation attacks by modifying a teacher's outputs to be poor training data for the student. \cite{SAVANITFX2025} proposed ADS, which perturbs the next-token probability distribution of the teacher model by encouraging the generation of tokens to degrade a student model, via a proxy model. \cite{MAYZV2026} proposed a prompting approach, which instructs the LLM to generate less distillation-friendly reasoning traces, and a gradient-based approach, which perturbs tokens post-generation to maximize test loss on a proxy model. The Defensive Output Generation (DOGe) method \cite{LITZQ2025} finetunes the final layer of the teacher model with a loss to prevent student learning. \cite{DINGCD2025} finetunes an LLM to remove self-talk tokens from the reasoning trace. We build upon these works by developing a game-theoretic formulation of the problem and proposing a method that addresses the gaps we observe in these works.

\subsection{Reasoning in LLMs}
Chain-of-thought (CoT) reasoning, in which models are trained to generate intermediate reasoning steps, emerged in recent years as a technique to improve the performance of models on complex multi-step tasks \cite{WEIWS2022}. The Tree-of-thought (ToT) technique generalized and improved CoT by enabling LLMs to consider different reasoning paths \cite{YAOYZS2023}. Reinforcement learning post-training with human-annotated reasoning data was used as an approach to distill reasoning capabilities models. Reinforcement learning without human-labeled data was then proposed \cite{OpenAILearningo1, QwQ2024}. Recently, models have been trained to learn reasoning by distilling knowledge from a large reasoning model, such as the R1-distill series \cite{GUOYZS2025}. Interpretability of Reasoning models has recently been studied, such as work which identifies how parts of a reasoning trace at the sentence and token-level are unequally important to answer generation \cite{WANGYGZ2025, BOGDANMNC2025}.

\subsection{Data Poisoning}
Data poisoning is an integrity attack on the training phase of machine learning models. These types of attacks manipulate the distribution of training data itself to harm the model according to an objective determined by the attacker \cite{FANYLQ2022}. The adversary is additionally constrained to perturbations that remain close to legitimate data, so that the attack is not trivially detectable. Early work on the topic started with these attacks on support vector machines and neural networks \cite{BIGGIONL2012, SHAFANIH2018}. These attacks have also been shown to be effective against LLMs \cite{CARLINIJC2024, SOULYRC2025}. Further work has studied targeted, less-detectable poisoning methods \cite{suVS2017,wiyatnoX2018}. Our work shows that antidistillation can be formulated in a very similar manner to data poisoning, where the roles of the attacker and defender are flipped. The game-theoretic formulation of data poisoning provides a useful perspective for antidistillation, and thus the formulation of data poisoning as a Stackelberg game has been provided in Appendix~\ref{appendix:data_poisoning}.

\section{Formulation of Antidistillation}

\subsection{Notation}
$D$ denotes the clean reference dataset; $\tilde{D}$ denotes a candidate poisoned dataset drawn from an admissible perturbation set $\mathcal{B}$ around $D$; $\theta_T$ and $\theta_S$ denote the teacher and student parameters with corresponding parameter spaces $\Theta_T$ and $\Theta_S$; and $\tilde{\mathcal{L}}$ and $\mathcal{L}$ denote the training loss on $\tilde{D}$ and evaluation loss on $D$, respectively.

\subsection{Formulation}

In prior literature, antidistillation techniques have focused on minimizing the utility of distilling from reasoning traces while preserving teacher performance. However, a practical defense must also account for detectability: if an attacker can identify that a defense is in place, they can adapt their sampling strategy to evade it. In this section, we formalize the antidistillation problem to study detectability.

Antidistillation fits the framework of data poisoning closely but differs in one important way: the architecture of the downstream model to be poisoned is not necessarily known a priori. Previous methods often assume access to a proxy model \cite{SAVANITFX2025, LITZQ2025}, which approximates the student, but this assumption may not hold in practice. As such, instead of maximizing the harm caused to \textit{some} optimal model trained on the poisoned data, the goal is to maximize the minimum error across \textit{all admissible architectures}. This yields the following robust optimization formulation of antidistillation:
\begin{align}\label{eq:antidistill}
    \tilde{D}_T^* = 
    \arg\max_{\tilde{D}_T \in \mathcal{B}} \inf_{\mathcal{H} \in \mathcal{F}} \E_{x\sim p_{\mathrm{data}}}\left[ \mathcal{L}\left(h_\mathcal{H}^*\left(\tilde{D}_T\right) ;x \right)\right],
\end{align}
where $\mathcal{B}(D; \theta_T, \Theta_T)$ (denoted $\mathcal{B}$ for convenience) is the set of admissible poisoned datasets, $\mathcal{F}$ is the set of admissible function classes (e.g.\ model architectures), $p_{\mathrm{data}}$ is the true data distribution, $\mathcal{L}$ is the population loss, and $h_\mathcal{H}^*(\tilde{D}_T) := \arg\inf_{h\in\mathcal{H}} \tilde{\mathcal{L}}(h;\tilde{D}_T)$. The attacker, in contrast, simply solves $h_{\mathcal{H}_s}^*(\tilde{D}_T) = \arg\inf_{h\in\mathcal{H}_s} \tilde{\mathcal{L}}(h;\tilde{D}_T)$ for their specific architecture $\mathcal{H}_s$. Whenever $\mathcal{H}_s \in \mathcal{F}$, the attacker's achieved population loss is bounded below by \eqref{eq:antidistill}.

In practice, $\mathcal{B}$ is typically a distortion ball around $D$ that captures both the utility cost and the detectability of the perturbation; we make this concrete in \S\ref{sec:detectability}. We additionally refer the reader Appendix~\ref{appendix:remarks} for further remarks on the worst-case formulation and a Bayesian relaxation, and Appendix~\ref{appendix:gen_prior_methods} for how this framework subsumes ADS \cite{SAVANITFX2025} and DOGe \cite{LITZQ2025}.

\subsection{Detectability as a Constraint on $\mathcal{B}$}
\label{sec:detectability}

Detectability of a poisoning scheme is encoded directly in the constraint set $\mathcal{B}$. Formally, if $\Gamma : \mathcal{B} \cup \{D\} \to \{0,1\}$ is a detector that flags whether a dataset has been poisoned, then a defense is undetectable with respect to $\Gamma$ when $\Gamma(\tilde{D}) = \Gamma(D)$ for all $\tilde{D} \in \mathcal{B}$. This perspective is well-developed in the data poisoning literature, where $\mathcal{F} = \{\mathcal{H}\}$ is a single known architecture and \eqref{eq:antidistill} reduces to standard data poisoning (See Appendix~\ref{appendix:data_poisoning}). Two canonical constructions for $\mathcal{B}$ are:
\begin{equation}\label{eq:dp_norm}
    \mathcal{B}_\epsilon = \left\{ (x + \delta, y) \mid \|\delta\|_p \leq \epsilon \right\},
\end{equation}
and
\begin{equation}\label{eq:dp_detect}
\mathcal{B}_\tau = \{(\tilde{x},y) \mid S(\tilde{x}) < \tau\},
\end{equation}

where $S(\cdot)$ is a detector the attacker assumes the defender uses. The first enforces small perturbations so that any reasonable detector is unlikely to fire; the second hard-codes a known detector into the feasible set.

We adapt both ideas to the antidistillation setting. In \S\ref{sec:gaussian}, we extend \eqref{eq:dp_norm} by perturbing the teacher logits at $k \ll L$ token positions and bounding the resulting KL divergence — an analogue of $\ell_2$-bounded poisoning. In \S\ref{sec:method}, we extend \eqref{eq:dp_detect} by constructing a method based on sentence deletions. To make analysis tractable, we provide a bound in Proposition \ref{prop:delete} on the KL-divergence of deleting a random set of $k$ tokens from the output rather than deleting specific sentences.

\subsection{A Gaussian Case Study}
\label{sec:gaussian}

The prior works, ADS and DOGe, perturb the teacher's output distribution (within some $\lambda$ ball) at every token position, analogous to $\ell_2$ attacks on images \cite{dezfooliFF2016, madryMSTV2018} in the adversarial robustness literature. However, a well-established finding from this area is that sparse, targeted perturbations can be equally effective while being harder to detect. One-pixel attacks \cite{suVS2017} and the Jacobian-based Saliency Map Approach (JSMA) \cite{wiyatnoX2018} demonstrate that perturbing a small number of carefully chosen input dimensions suffices to fool classifiers, often with greater success than uniform perturbation at equivalent total distortion budgets. Motivated by this analogy and equation \ref{eq:dp_norm}, we ask: \textit{what are the theoretical properties of an antidistillation method that modifies only $k \ll L$ tokens in a reasoning trace of length $L$?} To study this, we consider a simple Gaussian perturbation scheme on arbitrary $k$ tokens that admits closed-form detectability bounds.

 Specifically, we consider the following constraint set $\mathcal{B}$. Let $D = \{x^{(i)}\}_{i=1}^n$ be the clean dataset generated by teacher model $\theta_T$ where each output is a sequence of length $L_i$ (i.e. $x^{(i)} = (x_1^{(i)}, \ldots, x_{L_i}^{(i)})$). Let $f_{\theta_T}(x_{<t}) \in \mathbb{R}^{|\mathcal{V}|}$ denote the pre-softmax logits of the teacher model at step $t$ over a vocabulary $\mathcal{V}$. We define the constraint set $\mathcal{B}_{\eta,k}(D; \theta_T)$ as the set of all $\tilde{D}_T = \{\tilde{x}^{(i)}\}$ such that the following conditions hold for every sequence $i \in \{1,\ldots, N\}$:
 
 \begin{enumerate}
    \item There exists a mask $M^{(i)} \subset \{1,\ldots,L_i\}$ with $|M^{(i)}| \leq k$ such that $\tilde{x}_t^{(i)} = x_t^{(i)}$ for all $t \notin M^{(i)}$, and for $t \in M^{(i)}$,
    \[
    \tilde{x}_t^{(i)} \sim \mathrm{softmax}\!\left(f_{\theta_T}(x_{<t}^{(i)}) + \xi_t\right),\quad \xi_t \sim \mathcal{N}(0,\sigma^2 I).
    \]
    \item The variance is bounded as $\sigma^2 \leq \tfrac{2\eta}{k}$.
\end{enumerate}

Because answer tokens can be excluded from $M^{(i)}$, teacher utility is perfectly preserved. Additionally, since the student must autoregressively predict perturbed tokens from clean context, it is forced to learn noise at those positions, degrading distillation quality. The remaining question is detectability, which we measure via the expected KL divergence between the perturbed and original teacher distributions, in Proposition \ref{prop:detect} and Corollary \ref{cor:detect}.

\begin{proposition}\label{prop:detect}
Let $z \in \mathbb{R}^V$ be a vector of logits defining a teacher distribution
$P_T = \operatorname{softmax}(z)$.  Let $\epsilon \in \mathbb{R}^V$ be a random
noise vector satisfying $\mathbb{E}[\|\epsilon\|_2^2] = \sigma^2$, and define
the perturbed distribution $P_P = \operatorname{softmax}(z + \epsilon)$.  Then
\[
  \mathbb{E}_{\epsilon}\!\left[D_{\mathrm{KL}}(P_P \,\|\, P_T)\right]
  \;\leq\; \frac{\sigma^2}{2}.
\]
\end{proposition}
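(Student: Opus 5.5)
The plan is to express the KL divergence between two softmax distributions as a Bregman divergence of the log-partition function, and then bound that divergence by the curvature of the log-partition function. Let $A(u) = \log \sum_{j} e^{u_j}$, so that $\nabla A(u) = \operatorname{softmax}(u)$ and $\nabla^2 A(u) = \operatorname{diag}(p_u) - p_u p_u^\top$, where $p_u = \operatorname{softmax}(u)$. First I would verify, by direct expansion with $\log P_P(j) = (z+\epsilon)_j - A(z+\epsilon)$ and $\log P_T(j) = z_j - A(z)$, the identity
\[
  D_{\mathrm{KL}}(P_P \,\|\, P_T) = A(z) - A(z+\epsilon) + \langle \epsilon, \nabla A(z+\epsilon)\rangle .
\]
The right-hand side is exactly the Bregman divergence $D_A(z, z+\epsilon)$: it is the error of the first-order Taylor expansion of $A$ at the point $z+\epsilon$, evaluated at $z$.

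Next I would apply Taylor's theorem with integral remainder along the segment $u(s) = z + \epsilon - s\epsilon$, $s \in [0,1]$, which gives
\[
  D_A(z, z+\epsilon) = \int_0^1 (1-s)\, \epsilon^\top \nabla^2 A(u(s))\, \epsilon \, ds .
\]
From this, a uniform bound $\epsilon^\top \nabla^2 A(u)\,\epsilon \le c\,\|\epsilon\|_2^2$ over all $u$ yields $D_{\mathrm{KL}} \le \tfrac{c}{2}\|\epsilon\|_2^2$.

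The key step, and essentially the only substantive one, is bounding the Hessian quadratic form. I would observe that $v^\top(\operatorname{diag}(p) - pp^\top)v = \operatorname{Var}_{J\sim p}(v_J)$. This variance is at most $\mathbb{E}_{J \sim p}[v_J^2] \le \max_j v_j^2 \le \|v\|_2^2$, so we may take $c = 1$. A sharper argument, using Popoviciu's inequality for variances, gives $\operatorname{Var}_{J\sim p}(v_J) \le (\max_j v_j - \min_j v_j)^2/4 \le \|v\|_2^2/2$ and hence $c = 1/2$. The crude constant $c=1$ already suffices for the claim, so I would not pursue the sharper one.

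Finally, combining these steps gives the pointwise bound $D_{\mathrm{KL}}(P_P\|P_T) \le \|\epsilon\|_2^2/2$ for every realization of $\epsilon$. Taking expectations and using $\mathbb{E}\|\epsilon\|_2^2 = \sigma^2$ yields the claim. No distributional assumption on $\epsilon$ (Gaussianity or zero mean) is needed beyond its second moment. The main place care is needed is the sign and orientation of the Bregman identity, i.e.\ making sure the expansion is taken at $z+\epsilon$ (the first argument of the KL) rather than at $z$, so that the remainder carries the correct direction.
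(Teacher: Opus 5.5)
Your proposal is correct and follows the paper's proof essentially step for step. Both write the KL as the Bregman divergence of log-sum-exp at $(z, z+\epsilon)$, use the integral remainder, and bound the Hessian quadratic form by $\operatorname{Var}_{P_t}(\epsilon)\le\|\epsilon\|_2^2$; your $(1-s)$-weighted integral along $z+(1-s)\epsilon$ becomes the paper's $\int_0^1 t\,\epsilon^\top\nabla^2\Phi(z+t\epsilon)\,\epsilon\,dt$ under $t=1-s$. Your Popoviciu remark is also valid: it would sharpen the bound to $\sigma^2/4$, which the paper does not note.
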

\begin{proof}Rewriting the KL divergence between the two softmax distributions as the Bregman divergence generated by the log-sum-exp function $\Phi$, its integral form expresses the KL as a path integral of $\epsilon^\top \nabla^2\Phi\,\epsilon$, which is a variance under the intermediate softmax and hence bounded by $\|\epsilon\|_2^2$. Evaluating $\int_0^1 t\,dt = \frac{1}{2}$ and taking expectation with $\mathbb{E}\|\epsilon\|_2^2 = \sigma^2$
yields the bound $\sigma^2/2$. See full proof in Appendix \ref{appendix:proofs}.
\end{proof}
This shows that the divergence of a single next-token distribution is bounded proportionally to the variance of the perturbation noise. In the following corollary, we show that the expected divergence for some $k$ independent tokens is bounded proportionally to the number of tokens and noise added to each token.
\begin{corollary}\label{cor:detect}
    If $\epsilon^{(i)} \in \mathbb{R}^V$ is applied to $k$ independent tokens with $\E[\|\epsilon^{(i)}\|] \leq \sigma^2$, $\forall i$ then 
    \[
  \mathbb{E}_{\epsilon}\!\left[D_{\mathrm{KL}}(P_P \,\|\, P_T)\right]
  \;\leq\; k\frac{\sigma^2}{2}.
\]
\end{corollary}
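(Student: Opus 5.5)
The plan is to read the corollary at the level of the whole sequence. $P_P$ and $P_T$ are the joint distributions over a trace, obtained by perturbing the next-token logits at $k$ positions $t_1,\dots,t_k$ with independent noise vectors $\epsilon^{(1)},\dots,\epsilon^{(k)}$, and leaving all other positions untouched. We also read the hypothesis as $\E[\|\epsilon^{(i)}\|_2^2]\le\sigma^2$, which is what Proposition~\ref{prop:detect} requires. The main tool will be the chain rule for KL divergence over autoregressive factorizations:
\[
D_{\mathrm{KL}}(P_P\,\|\,P_T)=\sum_{t=1}^{L}\E_{x_{<t}\sim P_P}\Big[D_{\mathrm{KL}}\big(P_P(\cdot\mid x_{<t})\,\big\|\,P_T(\cdot\mid x_{<t})\big)\Big].
\]

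The first step is to observe that at every unperturbed position the two conditionals coincide, since both equal $\operatorname{softmax}(f_{\theta_T}(x_{<t}))$. Those terms therefore vanish, and only the $k$ masked positions contribute. At a masked position $t_i$, the perturbed conditional is $\operatorname{softmax}(z+\epsilon^{(i)})$ with $z=f_{\theta_T}(x_{<t_i})$.

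The second step is to take the expectation over the noise and bound each such term by $\sigma^2/2$ using Proposition~\ref{prop:detect}. This is where care is needed, because the context $x_{<t_i}$ is drawn from $P_P$ and so depends on the earlier noise vectors $\epsilon^{(j)}$, $j<i$. Independence resolves this: $\epsilon^{(i)}$ is independent of $x_{<t_i}$. We condition on the context, apply Proposition~\ref{prop:detect} with the fixed logit vector $z$, which is valid for any $z$, to get a conditional bound of $\sigma^2/2$, and then take the outer expectation over contexts. Because the bound holds uniformly in $z$, the outer expectation preserves it.

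Summing the $k$ nonzero terms gives $\E_\epsilon[D_{\mathrm{KL}}(P_P\|P_T)]\le k\sigma^2/2$.

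The main obstacle is this conditioning step. If one instead reads the corollary as concerning $k$ separate next-token distributions with the joint taken as a product, the argument is simpler: KL divergence is additive over product measures, and linearity of expectation gives the same sum of $k$ terms each bounded by Proposition~\ref{prop:detect}. I would present the chain-rule version, since it covers the autoregressive setting of $\mathcal{B}_{\eta,k}$, and note the product case as a special instance. Finally, with $\sigma^2\le 2\eta/k$ the bound becomes at most $\eta$, which connects the corollary back to the constraint set.
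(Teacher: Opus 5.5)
Your proof is correct, but its main line differs from the paper's. The paper works in the post-hoc model of $\mathcal{B}_{\eta,k}$, where each masked token is resampled from $\operatorname{softmax}(f_{\theta_T}(x_{<t})+\xi_t)$ with the \emph{clean} prefix. There the $k$ perturbed conditionals have contexts that do not depend on the noise, so the law of the masked tokens is a product. KL is then additive over the factors, and Proposition~\ref{prop:detect} is applied to each term. That is exactly the reading you treat as a ``special instance,'' and in it the conditioning issue you call the main obstacle never arises.

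Your chain-rule route bounds a different quantity: the full-trace KL when noise is injected during generation, so that perturbed tokens enter later contexts. The extra step, which you handle correctly, is this. The law of $x_{<t_i}$ under $P_P$ depends only on $\epsilon^{(1)},\dots,\epsilon^{(i-1)}$. Mutual independence of the noise vectors therefore lets you integrate out $\epsilon^{(i)}$ first, and the Proposition's bound is uniform in $z$. What this buys is a sequence-level detectability bound for on-line perturbation, which the product argument cannot give.

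Do drop the claim that this is ``the autoregressive setting of $\mathcal{B}_{\eta,k}$.'' In $\mathcal{B}_{\eta,k}$ the off-mask tokens are the original ones, generated after the original rather than the resampled masked tokens. So the full-trace conditionals at unmasked positions do not match the teacher's, and your vanishing step fails there; indeed that full-trace KL is generally positive even at $\sigma=0$. In that model only the KL of the $k$ masked conditionals is controlled, which is the product case.

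Finally, your reading of the hypothesis as $\E\|\epsilon^{(i)}\|_2^2\le\sigma^2$ is necessary, not just convenient, and it is what the paper's proof tacitly uses when invoking the Proposition. With the unsquared norm the claim already fails for $k=1$. Take $V=2$, $z=(0,-3)$ and deterministic $\epsilon=(-2\sqrt{2},2\sqrt{2})$: then $\|\epsilon\|_2=4=\sigma^2$, but $D_{\mathrm{KL}}(P_P\,\|\,P_T)\approx 2.61>2=\sigma^2/2$.
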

\begin{proof}
Since the $k$ noisy token distributions are conditionally independent given the unchanged context, the joint KL divergence factorizes into a sum of $k$ per-token KL divergences. We can then apply the previous proposition to each term bounds the total by $k\sigma^2/2$ (proof in Appendix \ref{appendix:proofs}). 
\end{proof}

Based on this result, for any $\tilde{D}_T \in \mathcal{B}_{\eta,k}(D;\theta_T)$ with independently chosen mask tokens, expected detectability is at most $\eta$, and grows linearly in $k$. Methods that modify fewer tokens are therefore strictly less detectable at fixed per-token noise---the design principle motivating our method.

\section{Method}
\label{sec:method} 

While the Gaussian case study establishes that modifying fewer tokens reduces detectability, directly implementing approaches inspired by data poisoning literature like $\mathcal{B}_{\eta,k}$, that is, adding Gaussian noise to logits of $k$ selected tokens, faces two main obstacles. First, \textbf{combinatorial hardness}: selecting which $k$ of $L$ trace tokens to perturb requires evaluating $\binom{L}{k}$ subsets per trace, which is practically infeasible. Even greedy approximations are non-trivial, requiring backpropagation through non-differentiable sampling and projecting onto a discrete set of tokens. Second, even if one were to choose the $k$ subset, they would still encounter the \textbf{coherence dilemma}: large $\sigma$ produces tokens that break grammatical and semantical coherence with unperturbed neighbors (Figure~\ref{fig:gauss_poison}), while small $\sigma$ is absorbed by the softmax and fails to change the sampled token at all (Figure~\ref{fig:gauss_poison_0.5}). This tension with coherence is inherent to perturbing individual token logits while leaving adjacent tokens fixed. These limitations motivate a method that can efficiently find perturbations that can break downstream distillation, while preserving coherence and limiting detectability. We address each of these points sequentially in the following exposition.

\subsection{From Token Positions to Reasoning Structure}\label{sec:thought_anchors}

A natural starting inquiry is to ask: what is actually critical to reasoning? At the token level, prior literature has established that specific tokens carry disproportionate importance for planning and downstream output \cite{ANTHROPICBIOLOGY2025} or inducing desirable behavior such as backtracking and correction \cite{muennighoffYSLFHZLCH2025}. However, token-level signals are often noisy \cite{minLWZZZS2026} and, thus for our application, hard to reliably target without access to the student model.

A more recent line of work takes sentences, rather than tokens, as the basic atoms of reasoning. \cite{BOGDANMNC2025} shows that a small fraction of sentences, termed \textit{thought anchors}, have disproportionate counterfactual influence on final answers. These are predominantly planning sentences (e.g., ``Alternatively, maybe I can calculate this in decimal'') and uncertainty management sentences (e.g., ``Wait, I made a mistake''), while active computation sentences perform predetermined operations with minimal causal influence. Further evidence from concurrent work by \cite{kapoorGCBHK2026} shows that specific steps in multi-step reasoning are critical: errors at these steps cascade and derail the solution, while routine continuations have minimal causal influence. Together, these works establish sentences and reasoning steps as a more reliable natural unit at which to intervene than tokens.

We start from two requirements that the prior limitations make explicit: an effective antidistillation method must preserve the coherence of the reasoning trace for legitimate users, and must remain undetectable to an adversary who would otherwise filter perturbed data before training. We cast this trade-off as a game between a defender (the trace publisher) and an adversary (the distiller), drawing on its natural connection to data poisoning. As a tractable case study, we analyze poisoning with Gaussian perturbations and show theoretically that perturbing fewer tokens is harder to detect, motivating sparse, targeted interventions. To select perturbation targets in a attacker-agnostic fashion, we turn to recent work in LLM interpretability showing that specific sentences in reasoning traces — \textit{Thought Anchors} — are disproportionately important to the model's output and reasoning flow; we argue that poisoning these sentences should maximally disrupt downstream student distillation. Because computing Thought Anchors is expensive, we propose \textit{TraceGuard} (TG), a cheap proxy that locates likely anchors by detecting common branching keywords. We show that TG-poisoned traces degrade student distillation relative to clean traces, and that TG achieves competitive performance against ADS on detectability metrics.

\begin{figure}[h]
    \centering
    \begin{subfigure}[t]{0.49\linewidth}
        \centering
        \includegraphics[width=\linewidth]{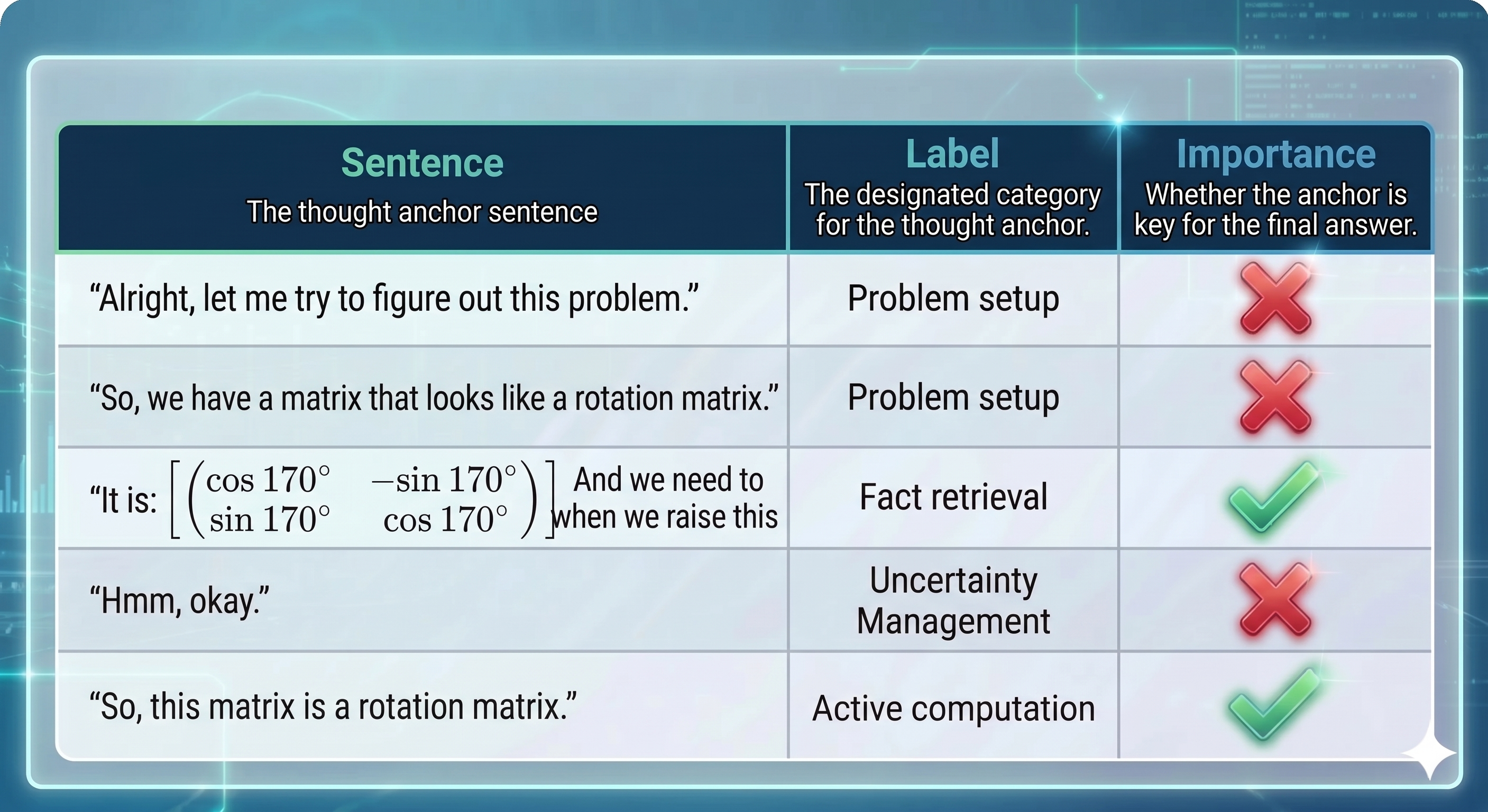}
        \caption{Thought Anchor demarcation example in MATH.}
        \label{fig:gauss_poison}
    \end{subfigure}
    \hfill
    \begin{subfigure}[t]{0.49\linewidth}
        \centering
        \includegraphics[width=\linewidth]{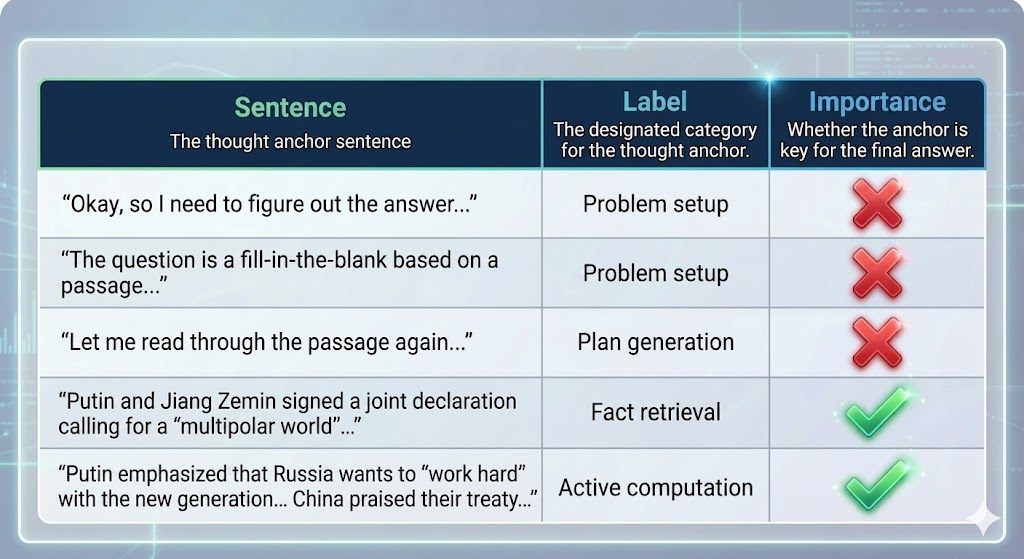}
        \caption{Thought Anchor demarcation example in MMLU.}
        \label{fig:gauss_poison_0.5}
    \end{subfigure}
    \caption{Thought Anchors and sentence types of reasoning traces generated by Deepseek R1 Distil Qwen 7B. We use the same taxonomy for determining thought anchors as \cite{BOGDANMNC2025} and ask Gemini 3 to label the thought anchors according to this scheme.}
    \label{fig:though_anchors}
\end{figure}

We find that besides sentence types such as active computation and fact retrieval, some of the most influential sentences are predominantly self-reflective or planning-based in nature, such as uncertainty acknowledgments or expressing the intention to course correct. Crucially, while these sentences are causally important to the reasoning trajectory, they do not add to the semantic content of the problem solving itself. This makes them ideal targets for poisoning: modifying them can disrupt the reasoning flow that a student would learn from, without changing the final answer or the logical substance of the trace, thereby preserving teacher performance and coherence by construction.

\subsection{TraceGuard}\label{sec:traceguard}

Having established the case for targeting thought anchors, a practical challenge remains: the full receiver-head methodology of \cite{BOGDANMNC2025} requires a forward pass with attention extraction per trace, which is computationally expensive. For a proof-of-concept validation of the thought-anchor hypothesis, we instead seek a cheap proxy. We observe empirically in Figure \ref{fig:though_anchors} that high-importance sentences are strongly correlated with sentences beginning with characteristic discourse markers such as ``Wait'', ``Alright'', and ``Hmm.'' We use this as a heuristic to identify thought anchors, denoted as \textit{TraceGuard} (TG), formalized in Algorithm~\ref{alg:method_algo}.

The removal budget $k$, which denotes the maximum number of branching sentences that can be removed for each trace, is included so we can vary the poisoning intensity. We provide an example of our method poisoning a reasoning trace in Figure \ref{fig:traceguard_poison_ex}.

\begin{figure}[H]
  \begin{minipage}[c]{0.48\textwidth}
    \begin{algorithm}[H]
    \caption{TraceGuard}
    \label{alg:method_algo}
    \begin{algorithmic}[1]
    \REQUIRE Reasoning trace $D = \{s_1, s_2, \dots, s_n\}$, branching tokens $S = \{\text{``Wait'', ``Hold on'', \ldots}\}$, removal budget $k$
    \ENSURE Poisoned reasoning trace $\tilde{D}$
    \STATE Initialize $\tilde{D} \leftarrow \emptyset$
    \STATE Initialize $\textit{removedCount} \leftarrow 0$
    \FORALL{$s_i \in D$}
        \STATE $\textit{firstToken} \leftarrow \textsc{GetFirstToken}(s_i)$
        \IF{$\textit{firstToken} \in S$ \AND $\textit{removedCount} < k$}
            \STATE $\textit{removedCount} \leftarrow \textit{removedCount} + 1$
            \STATE Remove anchor
        \ELSE
            \STATE Append $s_i$ to $\tilde{D}$ \COMMENT{Keep sentence}
        \ENDIF
    \ENDFOR
    \STATE Return $\tilde{D}$
    \end{algorithmic}
    \end{algorithm}
  \end{minipage}
  \hfill
  \begin{minipage}[c]{0.48\textwidth}
    \centering
    \includegraphics[width=\linewidth]{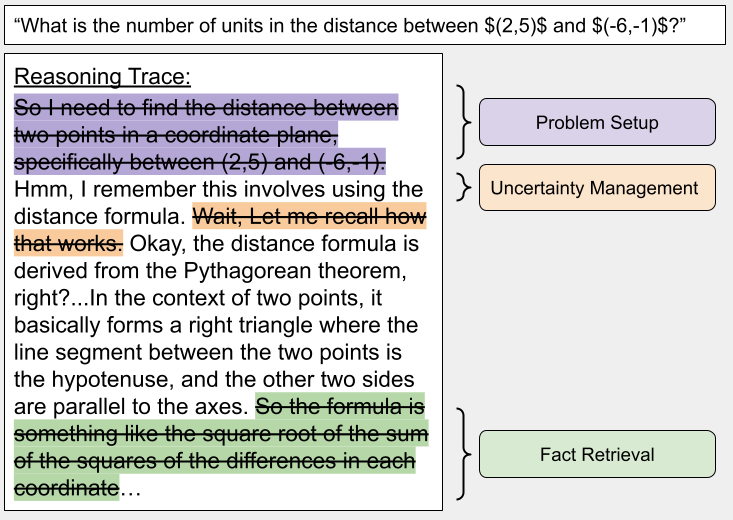}
    \caption{Example trace poisoning with the \texttt{TraceGuard} method. Without explicitly searching for thought anchors, each sentence that gets poisoned by the method is a thought anchor. These sentences have been shown to be disproportionately impactful for reasoning.}
    \label{fig:traceguard_poison_ex}
  \end{minipage}
\end{figure}

To provide an analysis of TraceGuard’s detectability analogous to the Gaussian perturbation in $\S\ref{sec:gaussian}$, we examine the more tractable case of deleting a random set of $k$ tokens. This approach, detailed in Proposition \ref{prop:delete}, is similar to Proposition \ref{prop:detect} and Corollary \ref{cor:detect}. We do not provide analysis on burst deletions as it is still an open and active area of research in information theory \cite{LiuD2025, LanSYG2026}.

\begin{proposition}[Informal]\label{prop:delete}
    Assume the LLM acts as an $m$-order Markov Process and the minimum probability of reaching any token from over all $m$-token histories is bounded away from 0. Then, the detectability of a sequence with $k$ token deletions versus a true LLM-generated sequence is $\mathcal{O}(km)$.
\end{proposition}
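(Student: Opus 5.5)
The approach is to measure detectability, as in Proposition~\ref{prop:detect}, by the KL divergence between the law of the observed (deleted) sequence and the true LLM law. Each deletion should then be charged only with the $m$ positions whose conditional distribution it corrupts. Concretely, let $P$ be the $m$-order Markov law with transition kernel $p(x_t \mid x_{t-m:t-1})$, and assume $p(a\mid h)\ge \gamma>0$ for every token $a$ and every history $h$. The trace $x_{1:L}$ is drawn from $P$, and the deleted positions form a random set $K$ of size $k$. Write $\tilde x_{1:L-k}$ for the surviving tokens, with law $\tilde P$. I will bound $D_{\mathrm{KL}}(\tilde P\,\|\,P_{L-k})$, where $P_{L-k}$ is the law of a genuine length-$(L-k)$ generation.

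First, I will reduce the problem to a per-position chain rule. Both laws factor autoregressively, so
\[
D_{\mathrm{KL}}(\tilde P\,\|\,P_{L-k})=\sum_{s}\E_{\tilde x_{<s}}\!\left[D_{\mathrm{KL}}\big(\tilde P(\cdot\mid \tilde x_{<s})\,\|\,p(\cdot\mid \tilde x_{s-m:s-1})\big)\right].
\]
Condition on the mask $K$ and use the joint convexity of KL, which lets the randomness of $K$ be averaged out at the end. Given $K$, the surviving token $\tilde x_s$ is a genuine sample from $p$ given its true $m$-history, and that true history agrees with the observed window $\tilde x_{s-m:s-1}$ unless a deleted position falls within the $m$ original positions preceding $\tilde x_s$. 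Every other term of the sum therefore vanishes. Each deletion can contaminate at most $m$ subsequent observed windows, so at most $km$ terms are nonzero.

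Second, I will bound each contaminated term by a constant depending only on $\gamma$. Every conditional the observer compares against has full support, with $p(a\mid h)\ge\gamma$. For any distribution $Q$ on $\mathcal V$, $D_{\mathrm{KL}}(Q\,\|\,p(\cdot\mid h))\le \sum_a Q(a)\log\frac{1}{\gamma}=\log(1/\gamma)$. The true conditional of $\tilde x_s$ given the observed past is a mixture of kernels $p(\cdot\mid h')$ over latent histories $h'$. Convexity therefore keeps each term below $\log(1/\gamma)$, even though the observed process is not Markov given $\tilde x_{<s}$. Summing over the at most $km$ contaminated terms gives $D_{\mathrm{KL}}\le km\log(1/\gamma)=\mathcal O(km)$.

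The main obstacle is the first step. Given $K$, the law of the deleted sequence is not literally autoregressive in the observed past: the conditional of $\tilde x_s$ depends on all of $\tilde x_{<s}$ through the posterior over deleted tokens, so it is not obvious that the uncontaminated terms vanish. I would handle this by working with the joint law of $(x_{1:L},K)$ and applying the data-processing inequality. KL between observed marginals is at most KL between the joint laws of the full sequence with the deleted tokens revealed, and in that joint the autoregressive structure holds exactly. The comparison process would be the true chain with the tokens at $K$ resampled from an independent kernel. Under that coupling only windows overlapping $K$ differ, and the $\gamma$ lower bound controls each of them. Positions in $K$ that are adjacent or fall within $m$ of one another only shrink the count of contaminated windows, so $km$ remains an upper bound.
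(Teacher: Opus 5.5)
Your argument is correct and has the same skeleton as the paper's: chain rule, at most $km$ conditionals disturbed by the deletions, each charged $\log(1/\gamma)$. The difference is that you bound the reverse divergence, and that changes which lemma does the work.

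The paper's formal statement is $D_{\mathrm{KL}}(P_X^{(M)}\,\|\,P_Y)$. There the expectation is under the genuine chain and the reference measure is the deleted process. Its key step is therefore the mixture argument: after marginalising over the latent deleted tokens, $P_Y(x_t\mid x_{1:t-1})\ge\gamma$.

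You bound $D_{\mathrm{KL}}(\tilde P\,\|\,P_{L-k})$, which matches the convention of Proposition~\ref{prop:detect} (perturbed law first). Here the reference is the genuine kernel. Your observation that $D_{\mathrm{KL}}(Q\,\|\,p(\cdot\mid h))\le\log(1/\gamma)$ for every $Q$ therefore already closes each contaminated term. The mixture/convexity sentence that follows it is exactly the paper's lemma, but it is redundant in your direction.

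Because KL is asymmetric, you have proved the informal statement but not literally the appendix inequality. To get that inequality you would swap roles and use the mixture bound on the denominator.

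On the other hand, you handle a random mask explicitly, via convexity of KL in its first argument. The paper glosses over this, since its set $\mathcal D$ presupposes a fixed deletion pattern. Your step needs $K$ to be independent of the trace.

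The ``main obstacle'' you flag is not actually an obstacle. Fix $K$ and let $j$ be the original index of $\tilde x_s$. If none of $j-m,\dots,j-1$ lies in $K$, then $\tilde x_{<s}$ is a function of $x_{<j}$ that contains $x_{j-m:j-1}=\tilde x_{s-m:s-1}$. The Markov property and the tower rule then give $\tilde P(\tilde x_s=a\mid\tilde x_{<s})=\E[p(a\mid x_{j-m:j-1})\mid\tilde x_{<s}]=p(a\mid\tilde x_{s-m:s-1})$, so the posterior over deleted tokens never enters.

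The data-processing detour is therefore unnecessary, and as described it is slightly off. For the comparison law to have observed marginal $P_{L-k}$, the surviving tokens must be generated from the observed window alone. Unless the tokens at $K$ are then drawn from the true kernel on their full original history, the $k$ positions in $K$ themselves contribute nonzero terms (up to $\log|\mathcal V|\le\log(1/\gamma)$ each for a uniform kernel). That gives $k(m+1)\log(1/\gamma)$ rather than $km\log(1/\gamma)$, which is still $\mathcal O(km)$ but not the count you state.
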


The formal statement and proof of Proposition~\ref{prop:delete} are provided in Appendix~\ref{appendix:proofs}. The $m$-order Markov assumption is an approximation of LLM behavior \cite{basuCV2023}. Papers such as \cite{LiuLHPBPL2024} and \cite{XiaoTCHL2024} support that the effective context window is not actually the full context window length. For the lower-bounded probability assumption, this can be fulfilled by a top-$k$ selection method as is typically done in practice. 

The above Proposition~\ref{prop:delete} shows that for token deletions, the detectability scales linearly with the number of deletions. This supports our use of $k$ as a hyperparameter to determine the trade-off between detectability vs student degradation.

\section{Experiments}

We evaluate the distillation performance degradation using teacher traces both unaltered and poisoned with \texttt{TraceGuard}. We then evaluate how the accuracy changes when a student is distilled with these datasets. When poisoning, we vary the removal budget by searching for at most $k=10,20,50$ reasoning sentences to remove in each trace. We report the accuracy drop from baseline to poisoned distillation versus the average number of tokens poisoned per trace. We use DeepSeek-R1-Distill-Qwen-7B as a teacher model, and Llama-3.2-3B,  Llama-3.2-1B, and Gemma 3 1B as student models, and MATH \cite{HENDRYCKSBK2021} and MMLU \cite{HENDRYCKSBBZ2021} as datasets. Exact distillation training parameters and setup are provided in Appendix \ref{appendix:experimental_setup}.

Regarding comparisons to existing antidistillation methods, DOGe requires extensive teacher model finetuning, which is unscalable in practice, therefore we disregard making a redirect comparison to the method. We compare ADS and TraceGuard on detectability metrics. \cite{SAVANITFX2025} shows that ADS faces a trade-off between teacher and student accuracy, making raw student-accuracy comparisons misleading on their own. We therefore evaluate both methods along two axes jointly: detectability and student accuracy. This framing rewards a method only when it suppresses the student distillation and remains hard to detect. Specifically, we evaluate Perplexity \cite{JELINEKMB1977, RADFORDNSS2018}, TF-IDF \cite{SALTONB1988}, Semantic Similarity, change in token length, teacher accuracy, and LLM-as-a-Judge, each measured relative to baseline. We provide further details in Appendix \ref{appendix:detectability_metrics}.

\subsection{Results}

Tables \ref{table:accuracy_drop_math} and \ref{table:accuracy_drop_mmlu} show student distillation degradation for varying maximum sentence removal budgets. There is a positive relationship between the number of tokens in the poisoned branch sentences and the accuracy drop between baseline and poisoned distillation. We further verify that these accuracy results are due to branching sentences, and not from removing sentences in general. We include an additional experiment where we poison random reasoning sentences, and compare the accuracy drop between this method and TraceGuard in Appendix \ref{appendix:random_poisoning}.

\begin{table}[t]
\centering
\small
\caption{Distillation accuracy on \textbf{MATH} versus the maximum sentence removal budget $k \in \{10, 20, 50\}$, with DeepSeek-R1-Distill-Qwen-7B as teacher. Each row shows $k$ alongside the resulting percentage of tokens poisoned (the percent change between unpoisoned and poisoned traces). Baseline traces are generated using $\tau=1.0$; \cite{SAVANITFX2025} showed that varying $\tau\in[0,1]$ does not significantly impact antidistillation performance. We exclude Llama 3.2 1B, whose baseline accuracy is near chance; poisoning is uninterpretable when unpoisoned traces already fail to teach the student.
\label{table:accuracy_drop_math}}
\begin{tabular}{lccc}
\toprule
\textbf{Student Model} & Param. & Teacher Accuracy & $\Delta$ \\
\midrule
\multirow{4}{*}{Llama 3.2 3B}
 & Baseline          & 20.89 & --    \\
 & $k{=}10$ (10\%)   & 18.24 & -2.65 \\
 & $k{=}20$ (17.5\%) & 16.85 & -4.04 \\
 & $k{=}50$ (35\%)   & 15.75 & -5.14 \\
\midrule
\multirow{4}{*}{Gemma 3 1B}
 & Baseline          & 20.13 & --    \\
 & $k{=}10$ (10\%)   & 18.15 & -1.99 \\
 & $k{=}20$ (17.5\%) & 16.60 & -3.53 \\
 & $k{=}50$ (35\%)   & 15.81 & -4.32 \\
\bottomrule
\end{tabular}
\end{table}

\begin{table}[t]
\centering
\small
\caption{Distillation accuracy on \textbf{MMLU} versus the maximum sentence removal budget $k \in \{10, 20, 50\}$, with DeepSeek-R1-Distill-Qwen-7B as teacher. Each row shows $k$ alongside the resulting percentage of tokens poisoned (the percent change between unpoisoned and poisoned traces). Baseline traces are generated using $\tau=1.0$; \cite{SAVANITFX2025} showed that varying $\tau\in[0,1]$ does not significantly impact antidistillation performance. We exclude Gemma 3 1B, whose baseline accuracy is near chance; poisoning is uninterpretable when unpoisoned traces already fail to teach the student.
\label{table:accuracy_drop_mmlu}}
\begin{tabular}{lccc}
\toprule
\textbf{Student Model} & Param. & Teacher Accuracy & $\Delta$ \\
\midrule
\multirow{4}{*}{Llama 3.2 3B}
 & Baseline           & 49.20 & --     \\
 & $k{=}10$ (12.5\%)  & 49.05 & -0.15  \\
 & $k{=}20$ (20\%)    & 45.30 & -3.90  \\
 & $k{=}50$ (27.5\%)  & 38.20 & -11.00 \\
\midrule
\multirow{4}{*}{Llama 3.2 1B}
 & Baseline           & 36.96 & --    \\
 & $k{=}10$ (12.5\%)  & 36.15 & -0.81 \\
 & $k{=}20$ (20\%)    & 33.20 & -3.76 \\
 & $k{=}50$ (27.5\%)  & 31.65 & -5.31 \\
\bottomrule
\end{tabular}
\end{table}

We observe that student accuracy degradation and detectability both scale monotonically with sentence budget $k$. This tradeo-off is explored further in Table \ref{table:detectability}.

\begin{table}[t]
\centering
\small
\caption{Detectability and student accuracy for ADS and \texttt{TraceGuard} on MATH with Deepseek R1 Distill Qwen 7B as teacher and LLama 3.2 3B as student. We denote Perplexity as PPL, Semantic Similarity as Sem. L2, $\Delta$ Tokens as $\Delta$ Tok., Teacher Accuracy as TA, and Student Accuracy as SA. For PPL, we used Llama 3 8B Instruct as the detector model. $\Delta$ Tokens reports mean difference of between poisoned trace and baseline.  Student accuracy is the distilled student's task accuracy. The arrows indicate whether lower or higher values are better.
\label{table:detectability}}
\begin{tabular}{llrrrrrr}
\toprule
Method & Param. & PPL($\downarrow$) & TF-IDF($\downarrow$) & Sem. L2($\downarrow$) & $\Delta$ Tok.($\downarrow$) & TA ($\uparrow$) & SA ($\downarrow$) \\
\midrule
Baseline   & ---            & $3.265\pm0.65$  & ---   & ---   & ---  & 66.9          & --- \\
\midrule
ADS        & $\lambda=0.08$ & $\mathbf{2.974 \pm 0.94}$ & 0.510 & 0.505 & 169  & 60.0          & 16.2 \\
ADS        & $\lambda=0.10$ & $\mathbf{2.712\pm1.73}$  & 0.533 & 0.557 & 260  & 53.7          & 15.5 \\
ADS        & $\lambda=0.13$ & $\mathbf{2.722\pm2.23}$  & 0.579 & 0.609 & 289  & 46.9          & 14.3 \\
\midrule
TG & $k=10$         & $3.647\pm0.91$ & \textbf{0.044} & \textbf{0.202} & 162 & \textbf{66.9} & 18.25 \\
TG & $k=20$         & $3.851\pm0.96$ & \textbf{0.069} & \textbf{0.219} & 282 & \textbf{66.9} & 16.85 \\
TG & $k=50$         & $4.465\pm1.27$ & \textbf{0.146} & \textbf{0.271} & 585 & \textbf{66.9} & 15.75 \\
\bottomrule
\end{tabular}
\end{table}

In Table \ref{table:detectability}, we compare TraceGuard to ADS on multiple detectability metrics. The perplexity is lower in ADS compared to TG, and expectedly so, since ADS perturbs the distribution per token, whereas TG removes entire blocks of tokens. Nevertheless, the high standard deviation of perplexity for ADS suggests a wider polarization, tending toward being either highly coherent or highly incoherent and consequently, a lower reliability compared to TG. 

On the other hand, the metrics arguably closer to human perception; such as TF-IDF and Semantic L2 distances between TG traces and baseline are much smaller, indicating that TG is better at semantically representing the original traces. The change in tokens with respect to baseline traces is similar across ADS and TG. The teacher accuracy degrades in ADS, whereas TG does not (by construction). We found that, while Teacher Accuracy may still be high in some cases for ADS, its failure cases produce visually incoherent outputs, which we compare to our method in Appendix \ref{appendix:ads_failure}. This is further reflected in the LLM-as-a-judge (in Appendix \ref{appendix:llm_judge}) results, which show a much lower average LLM rating for ADS versus TG.  

\section{Discussion \& Limitations}
\label{sec:conclusion}

We cast antidistillation as a Stackelberg game, explicitly integrating detectability into the constraint set to address the semantic and syntactic vulnerabilities ignored by prior methods. This theoretical grounding highlights that sparse perturbations can yield a less detectable alternative to full-trace poisoning. Drawing on mechanistic interpretability, we isolate thought anchors as the ideal targets for these sparse interventions and validate this approach with our proof-of-concept, TraceGuard. Empirically, TraceGuard perfectly illustrates the trade-off between student degradation and defense detectability that our framework anticipates. Ultimately, our combined theoretical and practical findings represent a significant step toward robust antidistillation mechanisms that resist counter-attacks without compromising the integrity of the teacher’s reasoning.

Several directions remain open. Our detectability analysis focuses on the token-level Gaussian case; extending it to sentence-level deletion would close the gap between our theory and TraceGuard's operation. The keyword heuristic for thought anchor identification, while effective as a proof-of-concept, leaves room for more precise interpretability-driven targeting that could sharpen the detectability–effectiveness tradeoff. Specifically, TraceGuard is highly dependent on reasoning patterns, therefore, TraceGuard will fail in situations where branching tokens are not used. Additionally, studying specific methods for antidistillation detection can introduce further theoretically-motivated metrics to analyze the detectability performance of defense methods.

\section*{Impact Statement}

The problem studied in this work is very relevant to today's global AI policy. Distillation attacks have been used to undermine the intellectual privacy of the teacher model provider and essentially circumvent export controls by lowering the computational bar to obtaining performant reasoning models. In the United States, this led to Bill HR8283 \cite{HR8283}, deeming distillation attacks as a threat to the United States. As such, this problem is relevant with regards to AI regulations and policy. Additionally, given the literature suggesting safety alignment guardrails can be lost from distillation post-training \cite{LIZYY2026, JAHANS2025, SHIWOW2026}, distillation attacks may be in conflict with AI safety guidelines around the world \cite{ChinaSecurityStandard, EUAIAct2024}.

\begin{ack}
\end{ack}

\bibliography{bibliography}
\bibliographystyle{plain}

\newpage
\appendix
\onecolumn
\section{Formulation of Data Poisoning as Stackelberg game}
\label{appendix:data_poisoning}

In data poisoning, an attacker wants to find a set of data points sufficiently ``close" to real data points that maximally harm the performance of a given statistical model. This format can be viewed as a Stackelberg game, where the attacker commits to a poisoned dataset $\tilde{D}$, after which the defender attempts to find the optimal model parameters from it. Observe that the best action for the defender in this game is:
\[
\theta^*(\tilde{D}) := \arg\min_{\theta \in \Theta} \tilde{\mathcal{L}}(\theta; \tilde{D}),
\]
while the optimal action for the attacker to take is: 
\[
\tilde{D}^* := \arg\max_{\tilde{D}\in \mathcal{B}} \mathcal{L}(\theta^*(\tilde{D}); D).
\]
Therefore, the game can be viewed as the following bi-level optimization problem: 
\begin{equation}\label{eq:data_pois}
    \tilde{D} = \arg \max_{\tilde{D} \in \mathcal{B}} \mathcal{L}(\arg\min_{\theta} \tilde{\mathcal{L}}(\theta; \tilde{D}); D),
\end{equation}
where $\tilde{D}$ is the poisoned dataset, $\mathcal{B}$ is the set of admissible poisoned datasets, $D$ is the dataset for final evaluation, $\mathcal{L}$ is the loss function to evaluate the optimal model $\theta$, and $\tilde{\mathcal{L}}$ is the loss function for evaluating the model on the poisoned dataset. The solution, $(\tilde{D}^*, \theta^*(\tilde{D^*}))$, to~\eqref{eq:data_pois} constitutes the Stackelberg equilibrium for this data poisoning game.

\section{Additional Remarks on Antidistillation}\label{appendix:remarks}

\subsection{Constraint Sets \& Detectability}

In practice, most constraint sets $\mathcal{B}(D; \theta_T, \Theta_T)$ will take the form 
\begin{align*}
    \mathcal{B}_{\epsilon}(D; \theta_T) := \left\{ \tilde{D}_T : d(\tilde{D}_T(\tilde{\theta}_T), D(\theta_T)) \leq \epsilon,\,\tilde{\theta}_T \in \Theta_T  \right\},
\end{align*}
where $\theta_T\in \Theta_T$, $d$ is some distortion measure between the perturbed dataset $\tilde{D}_T$, which is perhaps a function of some perturbed teacher model $\tilde{\theta}_T$, and $D$, which is a function of $\theta_T$. This distortion measure allows one to capture both the difference in performance of the perturbed dataset (or perturbed teacher model) and the detectability of the proposed antidistillation method.

\subsection{Necessity of the Outer Infimum}
Observe that Equation~\eqref{eq:antidistill} fundamentally differs from formulating the problem as directly evaluating the empirical risk minimizer over the entire function space:
$$\tilde{D}_T = \arg\max_{\tilde{D}_T \in \mathcal{B}} \E_{x\sim p_{\mathrm{data}}}\left[ \mathcal{L}\left( \arg\inf_{h \in \mathcal{F}} \tilde{\mathcal{L}}\left(h; \tilde{D}_T\right);x \right)\right].$$
If $\mathcal{F}$ is sufficiently expressive, directly evaluating $\arg\inf_{h \in \mathcal{F}}$ allows the model to perfectly memorize the poisoned dataset $\tilde{D}_T$ while performing arbitrarily poorly on the true distribution $p_{\mathrm{data}}$. This makes the defender's task trivial. By placing the $\inf_{\mathcal{H} \in \mathcal{F}}$ outside the population loss, we impose a stricter, more realistic burden on the defender: it must find a universal poison that maximizes the generalization error against the most resilient admissible architecture, rather than merely exploiting a hypothetical overfit model.
\subsection{Prior over Architectures}
Often, defenders will have some prior $P_\mathcal{F}$, which describes the likelihood different model architectures are for distillation attacks. For example, a defender may assign a very high probability to Transformer-based architectures, thus encoding useful information to make their defense stronger. Concretely, the space of allowable student architectures may be finite, thus allowing the defender to construct the following probability space $(\mathcal{F}, 2^\mathcal{F}, P_\mathcal{F})$. We note that in practice, one would have to imbue the space of all admissible student architectures with a $\sigma$-algebra, which may be non-trivial. 

Regardless, this permits the following relaxation of \eqref{eq:antidistill}:
\begin{align}\label{eq:prob_antidistil}
    \tilde{D}_T &= \arg\max_{\tilde{D}_T \in \mathcal{B}} \E_{\mathcal{H} \sim P_\mathcal{F}}\left[\E_{x\sim p_{\mathrm{data}}}\left[ \mathcal{L}\left(h_\mathcal{H}^*\left(\tilde{D}_T\right);x\right) \right] \right]\\
    &= \arg\max_{\tilde{D}_T \in \mathcal{B}} \E_{(\mathcal{H},x) \sim P_\mathcal{F}\otimes p_{\mathrm{data}}}\left[ \mathcal{L}\left(h_\mathcal{H}^*\left(\tilde{D}_T\right);x\right) \right] .
\end{align}
\footnotemark\footnotetext{Assuming $\mathcal{L}$ is non-negative and measurable, since $P_\mathcal{F}$ and $p_\mathrm{data}$ are independent this follows by Fubini's theorem.}
Also, observe that 
\begin{align*}
    &\max_{\tilde{D}_T \in \mathcal{B}} \inf_{\mathcal{H} \in \mathcal{F}} \E_{x\sim p_{\mathrm{data}}}\left[ \mathcal{L}\left(h_\mathcal{H}^*\left(\tilde{D}_T\right) ;x \right)\right] \\\leq &\max_{\tilde{D}_T \in \mathcal{B}} \E_{(\mathcal{H},x) \sim P_\mathcal{F}\otimes p_{\mathrm{data}}}\left[ \mathcal{L}\left(h_\mathcal{H}^*\left(\tilde{D}_T\right);x\right) \right],
\end{align*}
meaning that this is an upper-bound relaxation of the original worst-case, max max-min objective. However, the Bayesian setting may still be more practically useful as a defender will often have some prior on the feasible student architectures.

\begin{remark}
    This Bayesian reframing transforms the original purely adversarial game into a Bayesian Robust Stackelberg game, where the defender maximizes their expected payoff against a known distribution of attacker types.
\end{remark}

\section{Generalization of Previous Antidistillation Methods}\label{appendix:gen_prior_methods}
The main distinction between different antidistillation methods is how they instantiate the set of admissible perturbations of the dataset $D$ and in what they consider the set of viable student architectures. In the remainder of this subsection, we show how both Antidistillation Sampling \cite{SAVANITFX2025} and Defensive Output Generation \cite{LITZQ2025} fit into the framework of \eqref{eq:antidistill}.

\subsection{Antidistillation Sampling}
ADS proceeds in an online antidistillation fashion, where individual teacher outputs are given per user query. Their formulation is a greedy approach to the outer maximization over $\tilde{D}_T$, where they try to maximize the loss of each individual teacher output. ADS considers the following perturbation to the teacher output distribution:
\begin{align*}
\tilde{p}(\cdot \mid x_{1:t}; \theta_T) := \frac{1}{Z} \exp \Bigl( &\frac{1}{\tau} \log p(\cdot \mid x_{1:t}; \theta_T) + \lambda \Delta(\cdot \mid x_{1:t}; \theta_P) \Bigr),
\end{align*}

where $Z$ is a normalization factor, $\tau$ is the sampling temperature, $p(\cdot\mid x_{1:t};\theta_T)$ is the original teacher distribution for token $t+1$, $\lambda$ is a scaling hyperparameter and $\Delta$ is a perturbation to the output distribution determined by the gradients of a proxy model.\footnotemark

\footnotetext{
$\begin{aligned}[t]
\Delta(\cdot \mid x_{1:t}; \theta_P) &= \ell(\theta_P^+) - \ell(\theta_P) \\
&= \ell(\theta_P + \eta \nabla_{\theta_P} \log p(x_{t+1} \mid x_{1:t}; \theta_P)) - \ell(\theta_P)
\end{aligned}$
}

Thus, the set of admissible perturbations is the following,
\begin{align}\label{eq:b_ads}
    \mathcal{B}_{\mathrm{ADS}}(D; \theta_T) = \left \{ \tilde{x}_i : \tilde{x}_i \sim \tilde{p}(\cdot \mid x_{1:t};\theta_T) \right\}_{i=1}^n.
\end{align}

\subsection{Defensive Output Generation}
DOGe proposes that the task of antidistillation can be framed as the following bi-level optimization problem. The defender controls the teacher’s parameters to maximize its performance while weakening any student distilled from its outputs. The optimization problem is thus:
\begin{align*}
&\theta^*_{T} = \arg \max_{\theta_{T}} \left[ \mathrm{Perf}(\theta_T) - \lambda \cdot \mathrm{Perf} \left( \theta^*_S \right) \right]\\
&\theta^*_S = \arg \min_{\theta_S} \mathcal{L}(\theta_S ; D_{KD}(\theta_{T})),
\end{align*}
where $\mathrm{Perf}(\cdot)$ is the performance of the input model, $D_{KD}$ is the outputs of the teacher model $\theta^*_T$, and $\lambda > 0$ is a scaling hyperparameter. 

We connect this to our formulation in \eqref{eq:antidistill} by first defining
\begin{align}\label{eq:b_doge}
    &\mathcal{B}_{\mathrm{DOGe}}(D; \theta_T,\Theta_T) \\&= \left\{ \tilde{D}_T : \tilde{D}_T \sim p(\cdot\mid \theta_T), \theta_T \in \Theta_T, \mathrm{Perf}(\theta_T) \geq \tau \right\}.
\end{align}
If we now consider the case where $|\mathcal{F}| = 1$, \eqref{eq:antidistill} simplifies to
\begin{align*}
\tilde{D}_T &= \arg\max_{\tilde{D}_T \in \mathcal{B}_{\mathrm{DOGe}}} \E_{x\sim p_{\mathrm{data}}} \left[ -\mathrm{Perf}(\theta_S^*(\tilde{D}_T);x) \right].
\end{align*}
Since the dataset entirely depends on the parameters $\theta_T$ we can determine $\tilde{D}_T$ equivalently by finding
\begin{align*}
    \theta_T^* = \arg\max_{\theta_T \in \Theta_T,
    \mathrm{Perf}(\theta_T)\geq \tau} \E_{x\sim p_{\mathrm{data}}} \left[ -\mathrm{Perf}(\theta_S^*(\tilde{D}_T);x) \right],
\end{align*}
and taking a Lagrangian relaxation of this problem yields
\begin{align*}
    \theta_T^* = \arg\max_{\theta_T \in \Theta_T} \mathrm{Perf}(\theta_T) - \lambda \mathrm{Perf}(\theta_S^*(\theta_T))
\end{align*}
which is exactly the DOGe framework.

\section{Trace Anchor Analysis}
\label{appendix:thought_anchor}

\begin{figure}[H]
    \centering
    \includegraphics[width=1\linewidth]{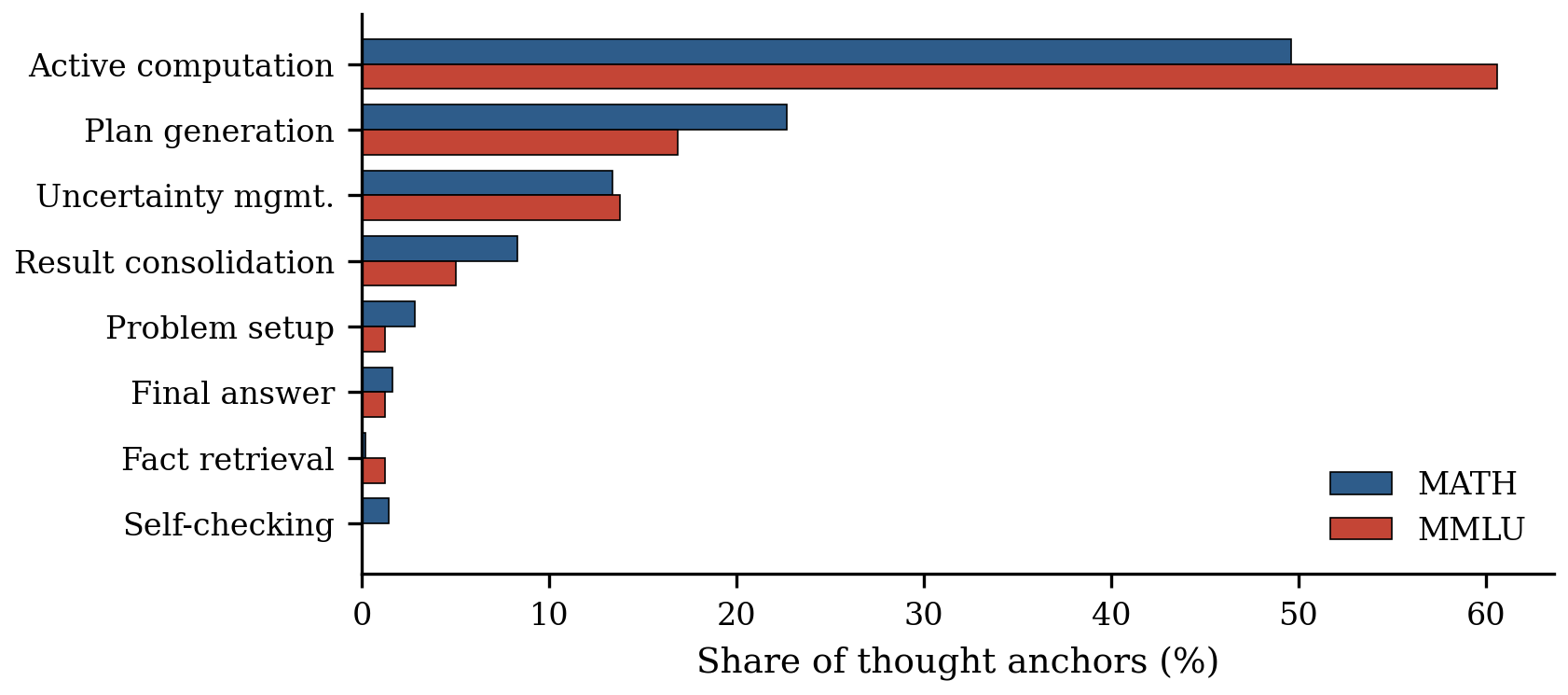}
    \caption{Percent share of thought anchors across MATH and MMLU. While Active Computation is the most common thought anchor, which is both important for answer generation and user understanding, Plan Generation and Uncertainty Management are mainly useful for answer generation. Additionally, anchor category distributions agree closely across the two datasets (Pearson $r=0.98$ , $p<10^{-4}$), indicating that differences in occurrences of thought anchors is dataset agnostic.}
    \label{fig:placeholder}
\end{figure}

The main text motivates why we look at thought anchors and shows representative examples; this appendix details the experimental setup and reports the full per-dataset results. Because thought anchors are the unit at which we hypothesize disruption will propagate to a student under distillation, we need a precise account of how they are estimated, how stable that estimate is, and what the sentence-category distribution looks like across our two datasets. We follow the methodology of \cite{BOGDANMNC2025}, with implementation details adapted to our teacher model and trace lengths, described below.

Specifically, we analyze correct rollouts from DeepSeek R1 Distill Qwen 7B on MATH and MMLU, retaining one trace per problem: $50$ traces per dataset for analysis and a disjoint $10$ for receiver-head identification.

Following \cite{BOGDANMNC2025}, we identify attention heads that concentrate long-range attention onto a small number of earlier sentences. Responses are split into sentences on terminal punctuation, and tokens mapped to sentences via tokenizer offsets. For each head $(\ell, h)$ and each identification trace, we average token-level attention into a sentence-level matrix $A^{(\ell,h)} \in \mathbb{R}^{S \times S}$ and compute the \emph{vertical attention} received by sentence $j$,
\begin{equation}
v^{(\ell,h)}_j = \frac{1}{|\mathcal{D}_j|}\sum_{i \in \mathcal{D}_j} A^{(\ell,h)}_{ij}, \qquad \mathcal{D}_j = \{i : i > j + d_{\min}\},
\end{equation}
with $d_{\min}=4$ to isolate long-range from adjacency effects. We subtract the per-trace head-averaged profile $\bar{v}_j$ to remove the shared baseline, and score each head by the excess kurtosis of the residual $v^{(\ell,h)} - \bar{v}$, which is high when a head's long-range attention is concentrated on a small number of sentences rather than spread across the trace. We average kurtosis across identification traces and retain the top $K=28$ heads as the receiver set $\mathcal{R}$. For each analysis trace, the broadcasting score for sentence $j$ is the mean vertical attention it receives across receiver heads, $\mathrm{BC}(j) = \frac{1}{K}\sum_{(\ell,h) \in \mathcal{R}} v^{(\ell,h)}_j$, and the top-$10$ sentences by $\mathrm{BC}(j)$ are reported as that trace's thought anchors.

To characterize \emph{what kind} of reasoning step each anchor corresponds to, we tag every sentence with one of eight functional categories from \cite{BOGDANMNC2025} (plan generation, uncertainty management, self-checking, fact retrieval, active computation, result consolidation, problem setup, final answer) using regex matching on category-specific keywords (e.g., ``alternatively'' for planning; ``wait'', ``actually'' for uncertainty management), with unmatched sentences defaulting to the modal category, \emph{active computation}.

Identification uses sequence length $2{,}048$; broadcasting uses $4{,}096$. The teacher runs in \texttt{bfloat16} with eager attention.
\newpage
\section{Detectability for defense methods}
\label{appendix:detectability_metrics}
\subsection{Metrics}
Perplexity measures how "surprised" a language model is by a sequence of text. Our perplexity metric uses Llama 3 8B Instruct as a detector model computed as a sliding window (max length 2048, stride 512). TF-IDF (Term Frequency-Inverse Document Frequency) focuses on the specific vocabulary used in the traces. By measuring the mean cosine distance between paired traces, we are looking for lexical consistency. Our specific implementation uses an ngram range of $(2,5)$ and L2 norm with 20k max features. Semantic Similarity is the mean L2 distance between paired BGE-M3 (\texttt{BAAI/bge-m3}) embeddings with \texttt{max\_length=8192}.
\subsection{LLM-as-a-Judge}
\label{appendix:llm_judge}
We evaluate trace coherence using LLM-as-a-Judge with Gemini 3 Flash on 32 traces generated from the MATH dataset. The following judge prompt is used:

\begin{promptbox}{Judge Prompt}
You are an expert AI judge evaluating the coherency and naturalness of reasoning traces. \\
You will be evaluating 3 pairs of traces. \\

For each pair, you are given a Baseline trace and a Modified trace.
\begin{enumerate}
    \item Evaluate the Baseline trace and rate it on a scale of 1 to 10 for Coherency and Naturalness.
    \item Compare the Modified trace to the Baseline trace and rate the Modified trace on a scale of 1 to 10.
\end{enumerate}

\begin{itemize}
    \item 10 means the trace is perfectly coherent, natural, and reads exactly like a normal human/AI thought process.
    \item 1 means the trace is completely incoherent, unnatural, filled with gibberish, or completely broken.
\end{itemize}
\end{promptbox}

We evaluate traces from TG with $k=10$, ADS with $\lambda=0.08$, and baseline traces. The results are contained in Table~\ref{tab:llm_as_judge}.

\begin{table}[h]
\centering
\caption{Average LLM-as-Judge evaluation scores (mean $\pm$ std) across 32 samples using Gemini 3 Flash as the judge. The LLM judge ranks traces generated by TG significantly better than those generated by ADS.}
\begin{tabular}{lc}
\toprule
Condition & Average Score \\
\midrule
Baseline & $9.56 \pm 0.62$ \\
Antidistillation Sampling & $3.09 \pm 2.10$ \\
TraceGuard & $7.31 \pm 0.78$ \\
\bottomrule
\end{tabular}
\label{tab:llm_as_judge}
\end{table}

\section{Trace Examples of TraceGuard and ADS} 
\label{appendix:ads_failure}
\begin{figure}[H]
    \centering
    \includegraphics[width=1\linewidth]{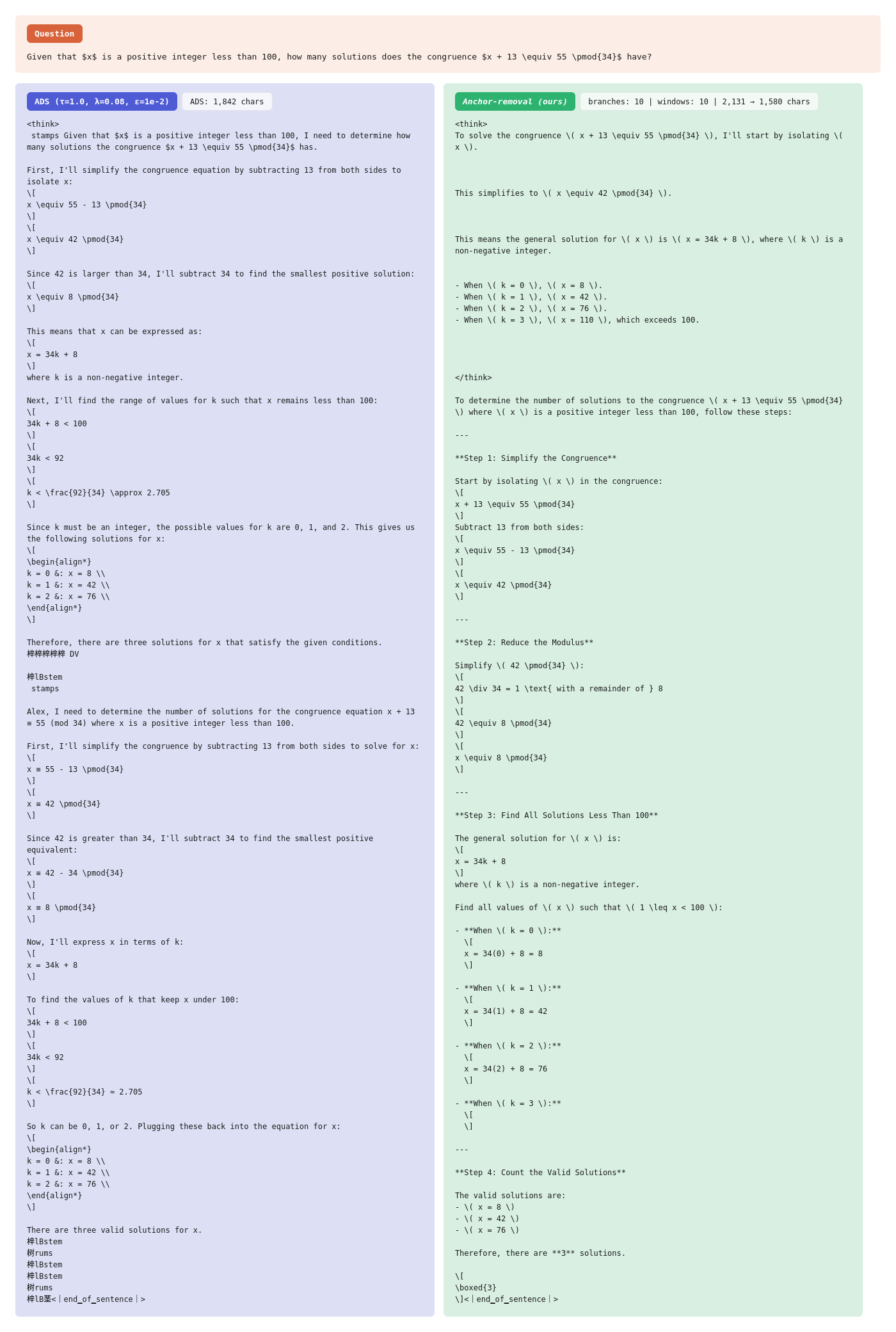}
    \caption{Example ADS versus TraceGuard traces. The ADS trace insert random tokens in places, such as at the end, which does not happen in TraceGuard.}
    \label{fig:placeholder}
\end{figure}

\begin{figure}
    \centering
    \includegraphics[width=1\linewidth]{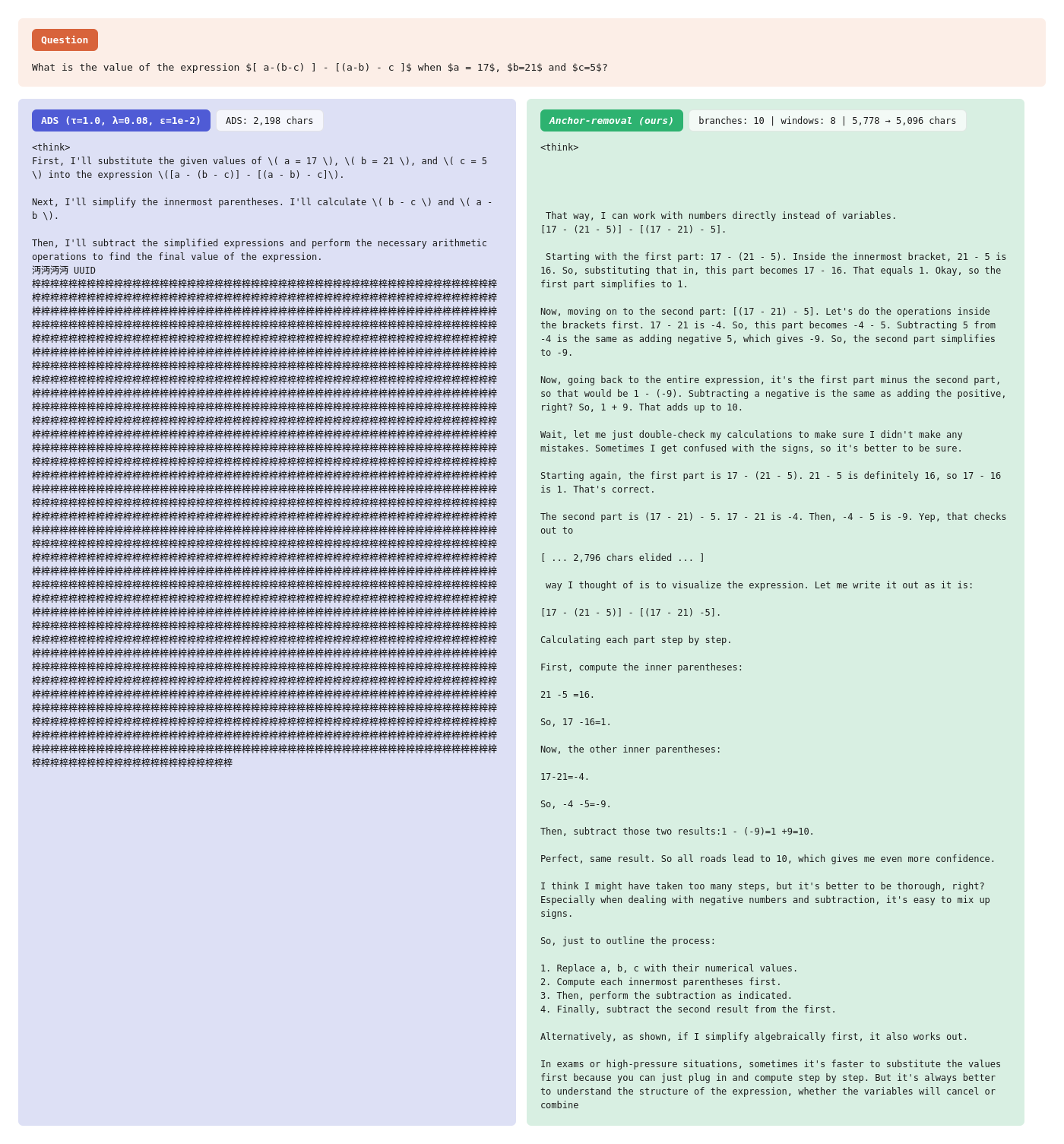}
    \caption{Example ADS versus TraceGuard traces. ADS looks acceptable for the first few sentences, before repeating the same random character over and over.}
    \label{fig:placeholder}
\end{figure}
\section{Additional Experiments}

\subsection{Method against random sentence poisoning}
\label{appendix:random_poisoning}

Our method shows that student model distillation performance is worse when the teacher model traces are poisoned than baseline. In this additional experiment, we compare the results of our standard method to poisoning random sentences. The goal of this experiment is to verify that the backtracking sentences we poison are truly important for distillation.

\begin{figure}[H]
    \centering
    \includegraphics[width=0.75\linewidth]{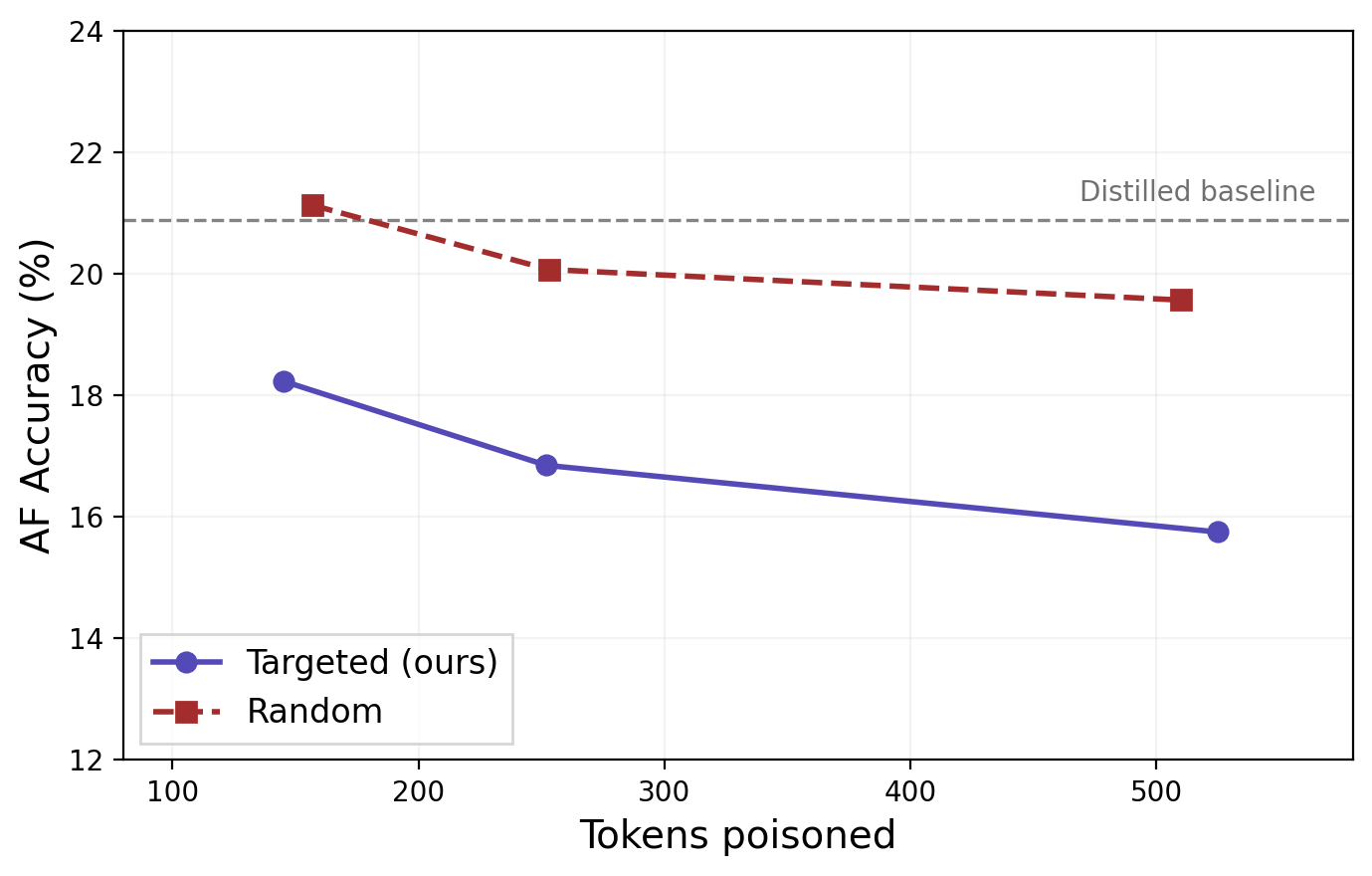}
    \caption{Accuracy drop between baseline versus poisoned reasoning trace distillation for varying numbers of tokens based on \textbf{TraceGuard versus Random Sentence Removal.} Based on the figure, poisoning random sentences has a substantially lower impact on distillation performance than the branching sentences. In this experiment, the teacher mdoel is DeepSeek R1 Distill Qwen 7B and the student model is Llama 3.2 3B.}
    \label{fig:random_versus_branching_drop}
\end{figure}

From Figure \ref{fig:random_versus_branching_drop}, the distillation performance gap is minimal for a varying number of tokens poisoned.

\subsection{Naive Gaussian Perturbation}

\begin{figure}[H]
    \centering
    \begin{subfigure}[t]{0.49\linewidth}
        \centering
        \includegraphics[width=\linewidth]{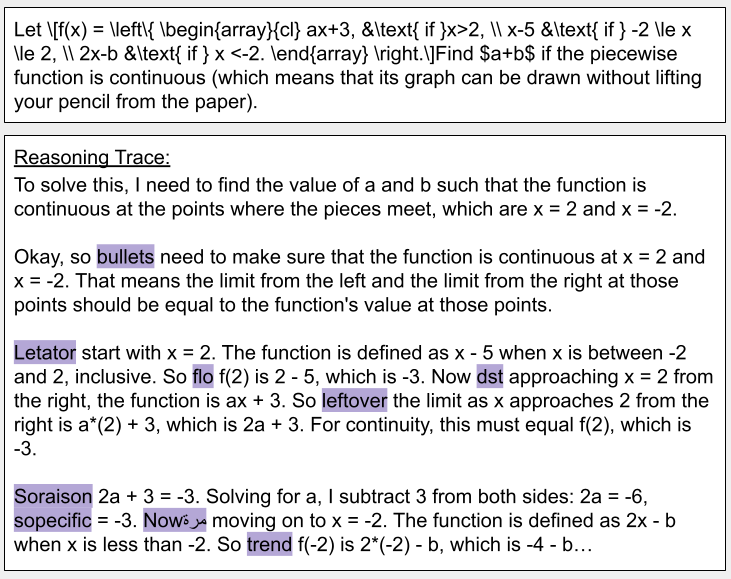}
        \caption{Gaussian perturbation with std=10. This poisoned trace has many grammatical errors.}
        \label{fig:gauss_poison}
    \end{subfigure}
    \hfill
    \begin{subfigure}[t]{0.49\linewidth}
        \centering
        \includegraphics[width=\linewidth]{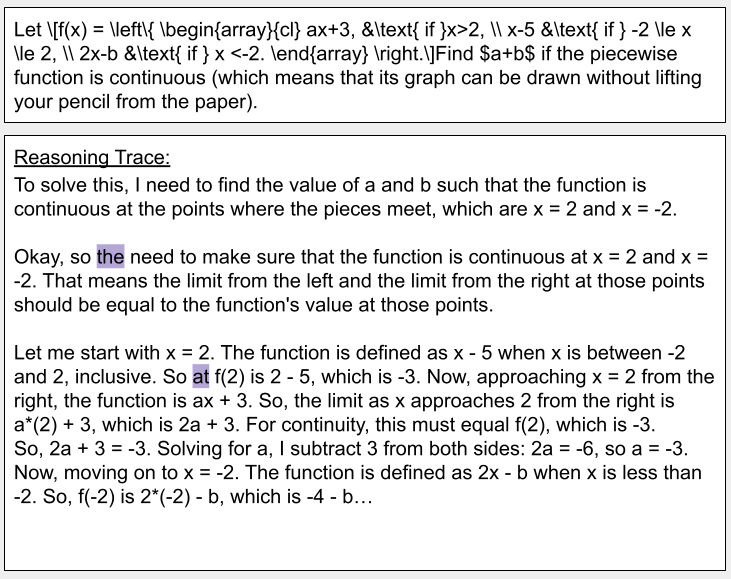}
        \caption{Gaussian perturbation with std$=0.5$. Many tokens which were poisoned did not modify the token from baseline because the poisoning to their logits was too small.}
        \label{fig:gauss_poison_0.5}
    \end{subfigure}
    \caption{Poisoned reasoning traces of a sample prompt from MATH \cite{HENDRYCKSBK2021} using the Gaussian perturbation method described in \ref{sec:method} shows that random perturbations lead to incoherent traces.}
    \label{fig:gauss_poison_combined}
\end{figure}

\section{Distillation experimental setup}
\label{appendix:experimental_setup}

In this work, we run knowledge distillation experiments on the student models using traces generated by the teacher model. The models are trained using HuggingFace's SFTTrainer with the following optimal hyperparameters:

\begin{table}[H]
\centering
\label{tab:hyperparameters}
\begin{tabular}{@{}ll@{}}
\toprule
\textbf{Parameter} & \textbf{Value} \\ \midrule
Optimizer & Fused AdamW \\
Learning Rate & $5 \times 10^{-4}$ \\
LR Schedule & Cosine decay with 3\% linear warmup \\
Batch Size & 16 (via gradient accumulation) \\
Epochs & 3 \\
Weight Decay & 0.1 \\
Gradient Clipping & 1.0 (max norm) \\
Context Window & 4096 tokens \\
Precision & \texttt{bfloat16} (mixed precision) \\
Attention Mechanism & Flash Attention (\texttt{sdpa}) \\ \bottomrule
\end{tabular}
\caption{Hyperparameters and Hardware Settings for Student Model Distillation}
\end{table}

These hyperparameters are very similar to those used in the Antidistillation Sampling method \cite{SAVANITFX2025}.

When training, we used 4 NVIDIA H100s and 128GB of CPU memory. 

\section{Proofs}
\label{appendix:proofs}
\begin{proof}[Proof of Proposition \ref{prop:detect}]
    We begin by defining the log-sum-exp function
\[
  \Phi(z) \;=\; \log \sum_{w \in \mathcal{V}} \exp(z_w).
\]
$\Phi$ has two important properties:
\begin{align}
  \nabla \Phi(z) &= P_T, \label{eq:grad}\\
  \nabla^2 \Phi(z) &= \operatorname{diag}(P_T) - P_T P_T^\top. \label{eq:hess}
\end{align}
Because $\log P_T(v) = z_v - \Phi(z)$ and $\log P_P(v) = z_v + \epsilon_v -
\Phi(z+\epsilon)$, the KL divergence expands as
\begin{align*}
  D_{\mathrm{KL}}(P_P \,\|\, P_T)
  &= \sum_{v} P_P(v)\bigl[\log P_P(v) - \log P_T(v)\bigr] \\
  &= \sum_{v} P_P(v)\bigl[\epsilon_v - \Phi(z+\epsilon) + \Phi(z)\bigr] \\
  &= \Phi(z) - \Phi(z+\epsilon) + \langle \nabla\Phi(z+\epsilon),\, \epsilon \rangle.
\end{align*}
Rewriting $\epsilon = z - (z+\epsilon)$ in the inner product recovers the
definition of the Bregman divergence generated by $\Phi$:
\[
  D_{\mathrm{KL}}(P_P \,\|\, P_T)
  \;=\; \Phi(z) - \Phi(z+\epsilon)
      - \langle \nabla\Phi(z+\epsilon),\, z-(z+\epsilon) \rangle
  \;=\; D_{\Phi}(z,\, z+\epsilon).
\]
We know that any Bregman divergence
admits the integral representation
\[
  D_{\Phi}(z,\, z+\epsilon)
  \;=\; \int_0^1 t\;\epsilon^\top \nabla^2\Phi(z + t\epsilon)\,\epsilon\; dt,
\]
where the path is the line segment $z + t\epsilon$ for $t \in [0,1]$.  Writing
$P_t = \nabla\Phi(z+t\epsilon)$ for the intermediate softmax distribution, this
gives
\[
  D_{\mathrm{KL}}(P_P \,\|\, P_T)
  \;=\; \int_0^1 t\;\epsilon^\top\!\left(\operatorname{diag}(P_t) - P_t P_t^\top\right)\epsilon\; dt.
\]
The quadratic form equals the variance of the components of $\epsilon$ under $P_t$:
\[
  \epsilon^\top\!\left(\operatorname{diag}(P_t) - P_t P_t^\top\right)\epsilon
  \;=\; \sum_{v} P_t(v)\,\epsilon_v^2 \;-\;
        \Bigl(\sum_{v} P_t(v)\,\epsilon_v\Bigr)^{\!2}
  \;=\; \operatorname{Var}_{P_t}(\epsilon).
\]
Now we also have,
\[
  \operatorname{Var}_{P_t}(\epsilon)
  \;\leq\; \sum_{v} P_t(v)\,\epsilon_v^2
  \;\leq\; \sum_{v} \epsilon_v^2
  \;=\; \|\epsilon\|_2^2.
\]
Hence, uniformly for all $t \in [0,1]$,
\[
  \epsilon^\top \nabla^2\Phi(z+t\epsilon)\,\epsilon \;\leq\; \|\epsilon\|_2^2.
\]

Substituting in the bound we get,
\[
  D_{\mathrm{KL}}(P_P \,\|\, P_T)
  \;\leq\; \int_0^1 t\,\|\epsilon\|_2^2\; dt.
\]
Since $\|\epsilon\|_2^2$ no longer depends on $t$ or on the intermediate
distributions $P_t$, we may pass the expectation through the integral:
\begin{align*}
  \mathbb{E}_{\epsilon}\!\left[D_{\mathrm{KL}}(P_P \,\|\, P_T)\right]
  &\;\leq\; \mathbb{E}_{\epsilon}\!\left[\int_0^1 t\,\|\epsilon\|_2^2\; dt\right]
  \;=\; \int_0^1 t\;\mathbb{E}_{\epsilon}\!\left[\|\epsilon\|_2^2\right] dt \\
  &\;=\; \sigma^2 \int_0^1 t\; dt
  \;=\; \sigma^2 \cdot \frac{1}{2}
  \;=\; \frac{\sigma^2}{2}. \qedhere
\end{align*}

\end{proof}
\begin{proof}[Proof of Corollary \ref{cor:detect}]
    Observe that since the $k$ token distributions that are independently perturbed by noise are themselves independently conditioned on the original unchanged context (because the noise is added post-hoc, not during generation), we have that
    \[
    P_P(y_M | x, y_{\backslash M}) = \prod_{i=1}^k P_P(y_{m_i} | x, y_{\backslash M}).
    \]
    Similarly,
    \[
    P_T(y_M | x, y_{\setminus M}) = \prod_{i=1}^k P_T(y_{m_i} | x, y_{\setminus M})
    \]
    Thus, we have that
    \[
        D_{\text{KL}} \left( P_P(y_M) \parallel P_T(y_M) \right) = \sum_{i=1}^k D_{\text{KL}} \left( P_P(y_{m_i}) \parallel P_T(y_{m_i}) \right).
    \]
    Taking expectations, we have that
    \begin{align*}
        \E\left[\sum_{i=1}^k \left( P_P(y_M) \parallel P_T(y_M) \right) \right] &= \sum_{i=1}^k \E\left[D_{\text{KL}} \left( P_P(y_{m_i}) \parallel P_T(y_{m_i}) \right)\right]\\
        &\leq \sum_{i=1}^k \frac{\sigma^2}{2} = k\frac{\sigma^2}{2}
    \end{align*}
\end{proof}
\begin{proposition}[Formal Statement of Proposition \ref{prop:delete}]
    Let $\mathcal{X} = \{X_t\}_{t \ge 1}$ be an $m$-order Markov chain taking values in a finite state space $\mathcal{V}$. Let $P_X$ denote the true probability measure of this process. Assume there exists a strict lower bound $\gamma \in (0, 1)$ on the transition probabilities such that for any state $x_t \in \mathcal{V}$ and any immediate history $x_{t-m:t-1} \in \mathcal{V}^m$, the following holds:$$P_X(X_t = x_t \mid X_{t-m:t-1} = x_{t-m:t-1}) \ge \gamma$$Let $X^{(M)}$ be a sequence of length $M$ drawn contiguously from $P_X$, governed by the distribution $P_X^{(M)}$. Let $Y = (Y_1, \dots, Y_M)$ be a sequence of length $M$ formed by deleting exactly $k$ elements from a longer sequence $X^{(M+k)} \sim P_X$. Let $P_Y$ denote the marginal probability distribution of this resulting length-$M$ sequence. Then, the KL-divergence between the true contiguous distribution and the deleted distribution is bounded by:$$D_{KL}(P_X^{(M)} \parallel P_Y) \le k \cdot m \log\left(\frac{1}{\gamma}\right)$$
\end{proposition}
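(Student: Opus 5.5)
The plan is to reduce the statement to a \emph{pointwise} likelihood-ratio bound: show that for every length-$M$ string $y\in\mathcal{V}^M$,
\[
P_Y(y) \;\ge\; \gamma^{km}\, P_X^{(M)}(y).
\]
Given this, the claim follows immediately, since
\[
D_{KL}(P_X^{(M)}\parallel P_Y)=\mathbb{E}_{y\sim P_X^{(M)}}\!\left[\log \frac{P_X^{(M)}(y)}{P_Y(y)}\right]\le \log \gamma^{-km}= k\,m\log\frac{1}{\gamma}.
\]
This is the same shape of result as Corollary~\ref{cor:detect}: a per-deletion cost of $m\log(1/\gamma)$, summed over the $k$ deletions.

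First I fix the deletion positions. The statement leaves the mechanism choosing the deleted indices $S\subset\{1,\dots,M+k\}$, $|S|=k$, unspecified. I will assume it is random but independent of $X^{(M+k)}$, and write $P_Y=\sum_S w_S Q_S$, where $Q_S$ is the law of the string obtained by deleting the entries indexed by $S$. Because the pointwise bound is linear in $P_Y$, it suffices to prove $Q_S(y)\ge\gamma^{km}P_X^{(M)}(y)$ for each fixed $S$. Alternatively, joint convexity of KL in its second argument gives $D_{KL}(P\parallel \sum_S w_SQ_S)\le\sum_S w_S D_{KL}(P\parallel Q_S)$. For fixed $S$, I write
\[
Q_S(y)=\sum_{z\in\mathcal{V}^k} P_X^{(M+k)}(\text{interleave}(y,z;S)),
\]
and factor each term by the $m$-order chain rule into an initial-segment probability times transition probabilities.

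The key step is the comparison for a single deletion at position $j$, applied iteratively or handled all at once. I compare the factorization of $P_X^{(M+k)}(y_{<j},z,y_{\ge j})$ with that of $P_X^{(M)}(y)$.
\begin{itemize}
\item The factors for $y_{<j}$ coincide in the two factorizations.
\item The factor $P(z\mid \text{history})$ sums to $1$ over $z$.
\item Exactly the next $m$ transitions, those producing $y_j,\dots,y_{j+m-1}$, have histories that contain $z$. In the deleted sequence each of these is at least $\gamma$, while the corresponding factor of $P_X^{(M)}(y)$ is at most $1$.
\item From $y_{j+m}$ onward the histories consist only of $y$-symbols again, so those factors coincide.
\end{itemize}
Hence each deletion costs at most a factor $\gamma^m$. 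With $k$ deletions, at most $km$ transitions of $y$ are affected; overlapping windows only reduce this count. Summing out the $z$'s in order then yields $Q_S(y)\ge\gamma^{km}P_X^{(M)}(y)$.

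The main obstacle is the boundary and initial-distribution bookkeeping. If a deletion falls in the first $m$ positions, the first symbols of $Y$ are governed by the process's initial marginal rather than by transitions. Likewise, $X^{(M)}$ ``drawn contiguously'' must have the same starting law as $X^{(M+k)}$. I would handle this by assuming the chain is started in stationarity, or by padding with an arbitrary fixed history so that every symbol, including the first $m$, is produced by a transition bounded below by $\gamma$. The same window argument then goes through, because an affected initial-segment factor is again replaced by a product of at most $m$ transitions, each at least $\gamma$.

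Deletions at the tail, with $j>M$, affect no retained transition and are free, which only improves the bound.
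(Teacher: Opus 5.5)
Your proof is correct, but it is organized differently from the paper's.

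\textbf{The paper's route.} It uses the chain rule for KL, writing $D_{KL}(P_X^{(M)}\parallel P_Y)$ as a sum over $t$ of expected conditional log-ratios $\log\bigl(P_X(x_t\mid x_{1:t-1})/P_Y(x_t\mid x_{1:t-1})\bigr)$. It then argues three things:
\begin{enumerate}
\item a term vanishes when the $m$-step window before $Y_t$ contains no deletion, by the Markov property;
\item at most $km$ indices are disturbed;
\item at a disturbed index $P_Y(x_t\mid x_{1:t-1})\ge\gamma$, by marginalizing over the deleted symbols in that window.
\end{enumerate}

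\textbf{Your route.} You never touch the conditionals of the non-Markov process $Y$. You lower-bound the joint likelihood, $P_Y(y)\ge\gamma^{km}P_X^{(M)}(y)$, by replacing the at most $km$ affected transition factors with $\gamma$ and summing out the inserted symbols. The sum must be taken last symbol first, since a later inserted symbol's history may contain an earlier one. The accounting is identical to the paper's.

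\textbf{What your version gains.}
\begin{itemize}
\item It bounds the max-divergence, hence every R\'enyi divergence. It also gives a detector-level statement: any event of probability $p$ under clean traces keeps probability at least $\gamma^{km}p$ after deletion.
\item Being pointwise, it extends by linearity to random deletion patterns independent of the data. The paper's claim that undisturbed steps contribute exactly zero presupposes fixed positions. With random positions, $P_Y(y_t\mid y_{1:t-1})$ is a posterior mixture over patterns and the set of disturbed indices is not well defined, so the convexity step you mention would be needed.
\item It makes explicit two conventions the paper uses silently: time-homogeneity, and a $\gamma$-bound on the transitions producing the first $m$ symbols. The paper's appeal to a ``complete, valid $m$-step history'' is unjustified when a disturbed symbol's original position lies among the first $m$. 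Your padding device is exactly what repairs this.
\end{itemize}

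Your independence assumption is genuinely required. Take $\mathcal{V}=\{a,b\}$, $m=1$, $M=2$, $k=1$. Choosing the deleted position as a function of $X^{(3)}$ can avoid ever outputting $ab$, which makes the divergence infinite.

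\textbf{What the paper's version offers.} It is shorter, and it exhibits the per-position structure that parallels Corollary~\ref{cor:detect}.
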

\begin{remark}
    Here, $X^{(M+k)}$ represents the clean LLM output while $Y^{(k)}$ represents the LLM output after $k$ token deletions. 
\end{remark}
\begin{proof}
    By the chain-rule for KL-divergence, we have,
    $$D_{KL}(P_X^{(M)} \parallel P_Y) = \sum_{t=1}^M \mathbb{E}_{x \sim P_X^{(M)}} \left[ \log \frac{P_X(x_t \mid x_{1:t-1})}{P_Y(x_t \mid x_{1:t-1})} \right].$$
    Let $\mathcal{D} \subset \{1, \dots, M\}$ be the set of indices $t$ in the sequence $Y$ where a deletion occurred somewhere within the immediate $m$-step history leading up to $Y_t$.Because there are exactly $k$ deleted variables in total, and each deletion can disrupt the $m$-order Markov history of at most $m$ subsequent variables, the maximum number of disrupted transitions is bounded:$$|\mathcal{D}| \le k \cdot m.$$For any index $t \notin \mathcal{D}$, the $m$-step history $y_{t-m:t-1}$ contains no deletions and corresponds to a perfectly contiguous, uninterrupted chunk of the original $X$ sequence. Therefore, for all $t \notin \mathcal{D}$, the log-likelihood ratio evaluates exactly to zero:$$\log \frac{P_X(x_t \mid x_{t-m:t-1})}{P_Y(x_t \mid x_{1:t-1})} = \log(1) = 0.$$This allows us to drop all $t \notin \mathcal{D}$ from the summation:$$D_{KL}(P_X^{(M)} \parallel P_Y) = \sum_{t \in \mathcal{D}} \mathbb{E}_{x \sim P_X^{(M)}} \left[ \log \frac{P_X(x_t \mid x_{t-m:t-1})}{P_Y(x_t \mid x_{1:t-1})} \right].$$ For any $t \in \mathcal{D}$, we must bound the log-ratio. We know that $P_X(x_t \mid x_{t-m:t-1}) \le 1$, so $$\log \frac{P_X(x_t \mid x_{t-m:t-1})}{P_Y(x_t \mid x_{1:t-1})} \le \log \frac{1}{P_Y(x_t \mid x_{1:t-1})} = -\log P_Y(x_t \mid x_{1:t-1}).$$To lower bound the denominator $P_Y(x_t \mid x_{1:t-1})$, let $Z$ denote the set of deleted variables that originally belonged to the $m$-step history of $Y_t$. Using the Law of Total Probability, we marginalize over all possible realizations $z$ of these missing variables:$$P_Y(x_t \mid x_{1:t-1}) = \sum_z P_X(x_t \mid Z=z, \text{observed } m\text{-step past}) \cdot P(Z=z \mid x_{1:t-1}).$$Because the combined set of $(Z, \text{observed } m\text{-step past})$ constitutes a complete, valid $m$-step history in the original Markov chain, we apply our initial assumption that all transitions are lower-bounded by $\gamma$:$$P_X(x_t \mid Z=z, \text{observed } m\text{-step past}) \ge \gamma.$$ Substituting this into the summation yields a convex combination bounded by $\gamma$:$$P_Y(x_t \mid x_{1:t-1}) \ge \sum_z \gamma \cdot P(Z=z \mid x_{1:t-1}) = \gamma \sum_z P(Z=z \mid x_{1:t-1})=\gamma.$$ Consequently, the upper bound on the log-ratio for any disrupted step is:$$-\log P_Y(x_t \mid x_{1:t-1}) \le -\log(\gamma) = \log\left(\frac{1}{\gamma}\right).$$
    Substituting this back into the expression for $D_{KL}(P_X^{(M)} \parallel P_Y)$, we get, $$D_{KL}(P_X^{(M)} \parallel P_Y) \le |\mathcal{D}| \cdot \log\left(\frac{1}{\gamma}\right) \le k \cdot m \log\left(\frac{1}{\gamma}\right)$$
\end{proof}


\clearpage

\end{document}